\documentclass[a4paper,unpublished]{quantumarticle}
\pdfoutput=1
\usepackage[utf8]{inputenc}
\usepackage[english]{babel}
\usepackage[T1]{fontenc}
\usepackage{amsmath,amssymb,amsthm}
\usepackage{mathtools}
\usepackage{graphicx}
\usepackage{booktabs}
\usepackage{hyperref}

\newtheorem{theorem}{Theorem}
\newtheorem{lemma}[theorem]{Lemma}
\newtheorem{proposition}[theorem]{Proposition}
\newtheorem{corollary}[theorem]{Corollary}
\theoremstyle{remark}
\newtheorem*{remark}{Remark}

\newcommand{\THat}{\mathcal{T}_{\mathrm{Hat}}}

\newcommand{\OHat}{\Omega_{\mathrm{Hat}}}
\newcommand{\CHat}{\mathcal{C}_{\mathrm{Hat}}}
\newcommand{\CSpec}{\mathcal{C}_{\mathrm{Spec}}}
\newcommand{\CC}{\mathbb{C}}
\newcommand{\ZZ}{\mathbb{Z}}
\newcommand{\RR}{\mathbb{R}}
\newcommand{\Etwo}{\mathrm{E}(2)}
\newcommand{\SEtwo}{\mathrm{SE}(2)}
\newcommand{\freq}{\operatorname{freq}}
\newcommand{\rhoK}{\rho_{K}}
\newcommand{\gphi}{\varphi}
\newcommand{\LI}{\mathrm{LI}}

\begin{document}

\title{Quantum error-correcting codes from aperiodic monotiles:
the Hat and the Spectre}

\author{Josep Batle}
\email{jbv276@uib.es}
\affiliation{CRISP -- Centre de Recerca Independent de sa Pobla,
sa Pobla, Balearic Islands, Spain}
\affiliation{Departament de F\'isica, Universitat de les Illes
Balears, E-07122 Palma de Mallorca, Spain}
\author{Adam Bednorz}
\affiliation{Faculty of Physics, University of Warsaw,
ul.\ Pasteura 5, 02-093 Warsaw, Poland}

\maketitle

\begin{abstract}
Li and Boyle showed that the Penrose tiling defines a quantum
error-correcting code: superpositions of tilings over isometry orbits
protect quantum information against erasure of any bounded region. We
extend the construction to the aperiodic monotiles discovered by
Smith, Myers, Kaplan and Goodman-Strauss. For the Hat, we prove
strong local indistinguishability for \emph{all} Hat tilings, and we
prove local recoverability unconditionally for all \emph{nonsingular}
Hat tilings via the torus parametrization of the underlying
cut-and-project scheme. The remaining singular case reduces to one
sharply posed geometric question --- can a region that is a union of
hats be retiled a second way? --- which we verify computationally has
no counterexample up to a substantial scale: a certified $2490$-tile
patch admits precisely one tiling by hats, so all $2^{2490}$ of its
tile-subregions retile uniquely. Unlike the Penrose, Ammann--Beenker
and Fibonacci tilings, both monotiles form \emph{two}
local-indistinguishability classes, so their code spaces split into
two erasure-correcting sectors carrying a superselected classical
label. Whether the label survives depends on which isometries are
gauged: the Spectre's classes are exchanged by a $30^{\circ}$
rotation and merge once all proper isometries are gauged, whereas the
Hat's are exchanged only by reflections. Under the physically natural
gauge group $\SEtwo$, the Hat code therefore stores one robust
classical bit --- the handedness of its long-range order, readable in
any bounded window with separation $\Delta = |K|\sqrt{5}/3$ ---
alongside its protected quantum sectors. It is the reflexible
monotile, not the chiral one, that carries the chirality bit. We give
the exact Perron--Frobenius data of both codes, including the
per-class reflected-Hat frequencies $(3\mp\sqrt{5})/6$ and the
Spectre orientation-class frequencies $(5\pm\sqrt{15})/10$.
\end{abstract}

\section{Introduction}
\label{sec:intro}

Quantum error correction rests on storing logical information in
degrees of freedom that no local observer can read. The canonical
constructions --- stabilizer codes, topological codes --- engineer
this locally hidden information by hand. Li and Boyle
\cite{LiBoyle2024} observed that nature supplies examples for free:
the Penrose tiling, viewed as a quantum state, already has the two
properties that erasure correction requires. First, any two Penrose
tilings are \emph{locally indistinguishable}: every finite pattern
occurs in each of them with identical frequency, so no bounded
measurement can tell them apart. Second, they are \emph{locally
recoverable}: the tiles deleted from any bounded region are uniquely
determined by the tiles outside it. Superposing a tiling over its
orbit under the isometry group then produces codewords whose reduced
density matrices on any bounded region coincide, and whose mutual
distinguishability by local operators vanishes --- the Knill--Laflamme
conditions for erasure correction \cite{LiBoyle2024}. The same
construction applies to the Ammann--Beenker tiling and, in one
dimension, to the Fibonacci quasilattice.

The discovery in 2023 of aperiodic \emph{monotiles} transformed the
landscape of aperiodic order. Smith, Myers, Kaplan and Goodman-Strauss
first exhibited the Hat, a single polykite that tiles the plane only
aperiodically when copies of both handednesses are allowed
\cite{Smith2024hat}, and shortly afterwards the Spectre, a strictly
chiral monotile whose tilings use one handedness only
\cite{Smith2024spectre}; both shapes are drawn in
Figure~\ref{fig:anatomy}. The associated tiling spaces were rapidly
placed on the same footing as the classical quasicrystals: the Hat
family is topologically conjugate to a self-similar model set (the
CAP tiling) with pure-point spectrum \cite{BGS2025}, and the Spectre
hull was shown to have pure-point dynamical spectrum arising from a
cut-and-project scheme with Rauzy-fractal-type windows
\cite{BGMS2025}.

This paper carries the Li--Boyle construction to both monotiles. For
the Hat the outcome is structural consolidation: we show that both
Li--Boyle hypotheses can be grounded in the discovery paper's own
central theorem --- that \emph{every} Hat tiling carries a unique
infinite hierarchy of supertiles \cite[Thm.~5.1]{Smith2024hat} --- so
that the code is defined on the full family of Hat tilings, not on a
substitution-generated subfamily. Local indistinguishability follows
unconditionally, in its correct per-class form
(Theorem~\ref{thm:LI}). Local recoverability we
prove unconditionally for all \emph{nonsingular} Hat tilings
(Theorem~\ref{thm:recover}) --- the same scope as Li and Boyle's own
Fibonacci code --- using the torus parametrization of the
cut-and-project scheme established in \cite{BGS2025}. The singular case reduces to a single sharply posed geometric
question (Lemma~\ref{lem:defect}): can a region that is a union of
hats be tiled by hats a second way? We show this question has no counterexample up to a substantial
scale: a certified $2490$-tile patch admits precisely one tiling by
hats, so every one of its $2^{2490}$ tile-subregions retiles uniquely
(Proposition~\ref{prop:computational}).

The second outcome is what we regard as the paper's central
conceptual point: monotile hulls split into local-indistinguishability
sectors that behave as superselection sectors, producing a hybrid
quantum--classical memory --- a structure absent from every
previously studied example. Penrose, Ammann--Beenker and Fibonacci tilings each
form a single local-indistinguishability (LI) class; both monotiles
form \emph{two}. For the Hat these are the chirality classes ---
hat-majority versus anti-hat-majority tilings, exchanged by
reflections, since a mirror image inverts the $\gphi^{4}{:}1$
handedness ratio. For the Spectre they are the two rotational
subclasses of Baake, G\"ahler, Maz\'a\v{c} and Sadun, exchanged by
a $30^{\circ}$ rotation \cite{BGMS2025}. Each class separately supports the Li--Boyle construction, while a
coarse local observable distinguishes the classes: the code space
splits into two erasure-correcting sectors carrying a superselected
classical label. Whether the label survives depends on which
isometries are gauged,
and here the two monotiles part ways. The Spectre's classes are
exchanged by a \emph{proper} isometry, so gauging $\SEtwo$ ---
position and orientation unobservable, parity not gauged --- merges
its sectors; the Hat's classes are exchanged only by reflections, so
its label survives. Under the physically natural gauge group, an
infinite Hat tiling superselects one robust classical bit: the global
handedness of its long-range order, readable in any bounded window
through the reflected-tile fraction, with separation
$\Delta = |K|\sqrt{5}/3$. Counterintuitively, it is the reflexible
monotile, not the chiral one, that carries the chirality bit: strict
chirality removes the mirrored species from Spectre tilings entirely,
leaving parity nothing to protect.

The paper proceeds in four movements. Section~\ref{sec:framework}
recalls the Li--Boyle framework and fixes the vocabulary of both
fields. Section~\ref{sec:hat} develops the Hat: we first establish
the structural inputs the framework needs, then prove recoverability
in its correct scope and isolate the single question that remains
open, and finally construct the code and explain why the Hat
unexpectedly stores an additional classical bit.
Section~\ref{sec:spectre} runs the same programme for the Spectre
and shows why its analogous bit is gauged away by a mere rotation.
Section~\ref{sec:discussion} discusses the code-dimension question
and open problems. Two reading paths are possible. A reader from
quantum information needs, from the tiling side, only the facts of
\S\ref{sec:hat-pf} and Figure~\ref{fig:anatomy}, and may take the
cited structure theorems on faith at a first pass; a reader from
aperiodic order needs, from the quantum side, only
Eq.~\eqref{eq:KL} and the two hypotheses of
Section~\ref{sec:framework}. Table~\ref{tab:glossary} fixes the
vocabulary of both fields, and
Figures~\ref{fig:anatomy}--\ref{fig:spectre} carry the logical
skeleton of the argument.

\section{The Li--Boyle framework}
\label{sec:framework}

Let $\mathcal{T}$ be a family of tilings of $\RR^{2}$, closed under a
group $G$ of isometries ($G = \Etwo$ or $\SEtwo$ below). To each
tiling $T$ associate a state $|T\rangle$ in an ambient Hilbert space
in which distinct tilings are orthogonal, and define the
group-averaged states
\begin{equation}
|\Psi_{[T]}\rangle \;=\; \int_{G} dg\, |gT\rangle ,
\label{eq:superposition}
\end{equation}
labelled by the orbits $[T] \in \mathcal{T}/G$. Written out,
Eq.~\eqref{eq:superposition} is an equal-weight superposition of the
\emph{same} infinite tiling in every position and orientation the
plane allows,
\begin{equation}
|\Psi_{[T]}\rangle \;\propto\;
\bigl|\, \mathcal{T} \,\bigr\rangle
\;+\; \bigl|\, \mathcal{T}_{+a} \,\bigr\rangle
\;+\; \bigl|\, \rotatebox[origin=c]{22}{$\mathcal{T}$} \,\bigr\rangle
\;+\; \bigl|\, {\rotatebox[origin=c]{-38}{$\mathcal{T}$}}_{+a'}
\,\bigr\rangle \;+\; \cdots ,
\label{eq:ketsum}
\end{equation}
where the tilt and the subscript of the glyph indicate the rotation
and translation applied to the pattern. No term is preferred, because
no external frame exists to prefer it: $|\Psi_{[T]}\rangle$ is the
state of the tiling with completely delocalized position and
orientation --- the planar analogue of a zero-momentum,
zero-angular-momentum state. All physical information resides in the
internal relations of the pattern; none resides in where or how it
sits. The candidate code is
$\mathcal{C} = \overline{\operatorname{span}}
\{ |\Psi_{[T]}\rangle \}$. Erasure of a bounded region
$K \subset \RR^{2}$ is correctable if and only if the Knill--Laflamme
conditions hold: for every operator $O_{K}$ supported in $K$,
\begin{equation}
\langle \Psi_{[T_{1}]} | O_{K} | \Psi_{[T_{2}]} \rangle
\;=\; c(O_{K})\, \delta_{[T_{1}],[T_{2}]} .
\label{eq:KL}
\end{equation}
Li and Boyle \cite{LiBoyle2024} isolate two hypotheses on
$\mathcal{T}$ that together imply Eq.~\eqref{eq:KL}:

\emph{(H1) Strong local indistinguishability.} Every finite patch $P$
has a well-defined frequency $\freq(P)$, identical in every
$T \in \mathcal{T}$. This makes the diagonal value in
Eq.~\eqref{eq:KL} independent of $[T]$: the reduced density matrix
$\rhoK = \operatorname{Tr}_{K^{c}} |\Psi_{[T]}\rangle\langle
\Psi_{[T]}|$ depends only on the patch statistics visible in $K$,
which are the same for all representatives. The group average is what
turns statistics into states: for the bare tiling state $|T\rangle$,
$\rhoK$ would display the literal patch at $K$, which differs from
tiling to tiling; for the averaged $|\Psi_{[T]}\rangle$ it is the
frequency-weighted mixture over all patches, and (H1) makes that
mixture independent of the representative.

\emph{(H2) Local recoverability.} For every bounded $K$, the
restriction map $T \mapsto T_{K^{c}}$ is injective on $\mathcal{T}$:
the deleted patch is uniquely determined by its complement, so a
recovery operation can regenerate the lost region from the intact
remainder. This
forces the off-diagonal terms to vanish: if $[T_{1}] \neq [T_{2}]$,
then $g T_{1}$ and $h T_{2}$ differ outside $K$ for all $g, h \in G$,
so $O_{K}$ cannot connect them.

For the Penrose tiling, (H1) follows from unique ergodicity of the
hull and (H2) from the uniqueness of Robinson composition
\cite{LiBoyle2024}. Sections~\ref{sec:hat} and \ref{sec:spectre}
establish the analogues for the Hat and the Spectre.

One further degree of freedom in the construction will matter below:
the choice of the group $G$. Any group of isometries under which
$\mathcal{T}$ is closed can be used, and the choice is physical ---
it declares which reference frames are deemed unobservable. Averaging
over $\SEtwo$ gauges position and orientation but not parity;
averaging over $\Etwo$ gauges parity as well. Sections
\ref{sec:hat-code} and \ref{sec:spec-qubit} show that the sector
structure of the monotile codes depends on this choice in an
essential way.

Table~\ref{tab:glossary} fixes the vocabulary of both fields as used
in this paper.

\begin{table*}[t]
\centering
\small
\renewcommand{\arraystretch}{1.18}
\begin{tabular}{@{}lp{0.74\textwidth}@{}}
\toprule
\multicolumn{2}{@{}l}{\emph{Aperiodic order}}\\
\midrule
patch & a finite configuration of tiles occurring in a tiling. \\
patch frequency & number of copies of a patch per unit area, in the
infinite-volume limit. \\
LI class & maximal family of tilings sharing all patch frequencies;
its members cannot be distinguished by any bounded window. \\
hull $\Omega$ & closure of the translation orbit of a tiling in the
local topology; here each chirality (resp.\ orientation) class is a
hull. \\
inflation (substitution) & replacing each tile by a fixed
configuration of smaller tiles and rescaling; the linear inflation
factor is the rescaling ($\gphi^{2}$ for the Hat). \\
deflation (composition) & the inverse: grouping tiles into
supertiles; unique for both monotiles (unique hierarchy). \\
cut-and-project & construction of the tiling as a slice through a
higher-dimensional lattice; the slice position is the tiling's phase
$\gamma$. \\
window & the acceptance domain in the internal space of the
cut-and-project scheme. \\
singular tiling & one whose phase meets the window boundary; the
finitely many tilings sharing that phase form its fiber. \\
\midrule
\multicolumn{2}{@{}l}{\emph{Quantum information}}\\
\midrule
code space & subspace whose states carry the protected
information. \\
erasure of $K$ & loss of all degrees of freedom in the region $K$,
with the location of $K$ known. \\
Knill--Laflamme condition & Eq.~\eqref{eq:KL}; necessary and
sufficient for correctability of erasure of $K$. \\
sector / superselection & block decomposition such that no local
operator connects blocks; relative phases between blocks are
unobservable and the block label behaves classically. \\
hybrid memory & a direct sum of quantum codes indexed by a decodable
classical label. \\
\bottomrule
\end{tabular}
\caption{Working glossary. Each entry states what the term means in
this paper, not its most general usage.}
\label{tab:glossary}
\end{table*}

\section{The Hat code}
\label{sec:hat}

\subsection{Structural preliminaries}
\label{sec:hat-pf}

The hat is an 8-kite polykite on the kisrhombille lattice
\cite{Smith2024hat}. Let $\THat$ denote the set of all tilings of the
plane by isometric copies of the hat and its mirror image (the
anti-hat); every Hat tiling necessarily contains both handednesses.
The relevant symmetry group is the full Euclidean group $\Etwo$,
which acts on $\THat$.

The structural input from the discovery paper is the following.
Smith et al.\ prove, by a combinatorial and computer-assisted
analysis, that the hat is \emph{hierarchical}: in every tiling it
admits, every tile is nested within an infinite hierarchy of
ever-larger supertiles, and this hierarchy is unique
\cite[\S 5]{Smith2024hat}. In their own words, the hat ``must form
hierarchical --- and hence aperiodic --- tilings,'' and the
uncountably many tilings it admits ``all arise from substitution
rules'' and ``have the same local structure''
\cite[Abstract and \S 1]{Smith2024hat}. Equivalently, for every
$T \in \THat$ and every $n \geq 1$ there is a unique decomposition of
$T$ into level-$n$ supertiles, obtained by iterating the unique
composition into the four metatiles $H, T, P, F$. We refer to this as
the \emph{unique-hierarchy theorem}; it is the tiling-theoretic
analogue, established specifically for the Hat, of Solomyak's general
unique-composition theorem for aperiodic self-similar tilings
\cite{Solomyak1998, Solomyak1997}. The linear inflation of the (period-two,
chirality bookkeeping) substitution is $\gphi^{2}$, with
$\gphi = (1+\sqrt{5})/2$.

\paragraph{Two chirality classes.}
Fix once and for all which of the two mirror-image polykites is
called \emph{the hat}; its mirror image is the anti-hat. Reflecting
a Hat tiling produces another Hat tiling (the tile set is
reflection-closed), but \emph{not} one with the same patch
statistics: unreflected and reflected hats occur in the ratio
$\gphi^{4} : 1$ (Proposition~\ref{prop:PF} below), and a mirror
image inverts this ratio. Hat tilings therefore split into exactly
two translational local-indistinguishability classes,
\[
\THat \;=\; \THat^{+} \sqcup \THat^{-},
\]
distinguished by which handedness forms the majority
(Theorem~\ref{thm:LI} below makes this precise), with $\THat^{-}$
the set of mirror images of $\THat^{+}$. Every proper isometry
preserves each class, because rotations and translations do not
change the handedness of a tile; reflections exchange the two. The
class label is \emph{frame-independent}: it refers only to tile
chirality, never to orientation. Both properties are decisive in
\S\ref{sec:hat-code}. Figure~\ref{fig:anatomy} shows both monotiles: the hat and the
anti-hat on their underlying lattice, together with the Spectre,
which is treated in Section~\ref{sec:spectre}.

\begin{figure}[t]
\centering
\includegraphics[width=\linewidth]{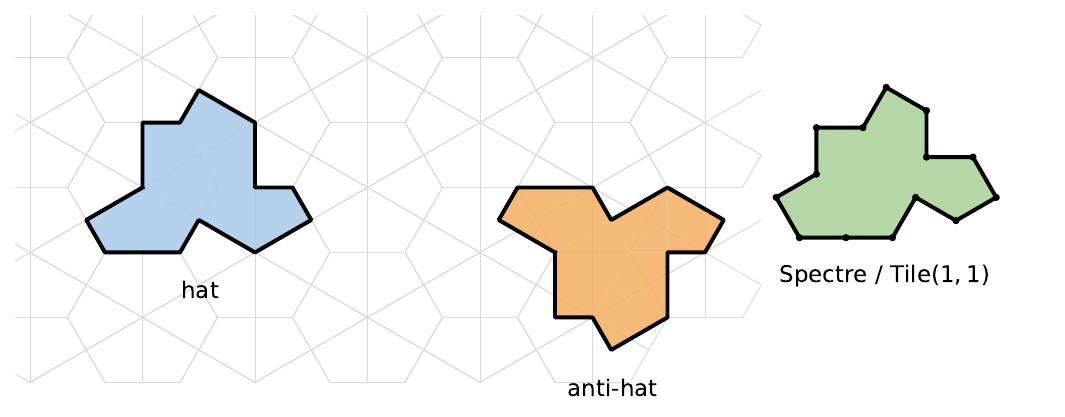}
\caption{The prototiles. The hat (blue) is a union of eight kites of the
kisrhombille lattice (grey); the anti-hat (orange) is its mirror
image. Every Hat tiling uses both, but never in equal proportion: one
handedness always dominates, in the exact ratio $\gphi^{4} : 1$, and
which one dominates is the bit of \S\ref{sec:hat-code}. The Spectre
(green) is the equilateral relative Tile$(1,1)$: the same fourteen
edge directions with both edge lengths set equal --- the two
interleaved direction classes each sum to zero (verified to machine
precision), so the polygon closes for every length pair, which is the
whole Tile$(a,b)$ family in one identity. Its tilings use one
handedness only; in the physical Spectre the straight edges are
replaced by congruent curves so that reflected copies cannot fit
\cite{Smith2024spectre}.}
\label{fig:anatomy}
\end{figure}

For frequency bookkeeping we use a two-type census: a Hat tiling
decomposes into \emph{compounds} (a reflected hat paired with an
unreflected partner) and \emph{single} unreflected hats, with
substitution matrix
\begin{equation}
M \;=\; \begin{pmatrix} 1 & 1 \\ 5 & 6 \end{pmatrix},
\quad
\begin{aligned}
\text{comp.} &\mapsto 1\,\text{comp.} + 5\,\text{sing.},\\
\text{sing.} &\mapsto 1\,\text{comp.} + 6\,\text{sing.}
\end{aligned}
\label{eq:M}
\end{equation}
This census tracks the reflected/unreflected content; the full
four-metatile substitution of \cite{Smith2024hat} refines it but is
not needed for the frequency statements below.

\begin{proposition}[Substitution dynamics of the census matrix]
\label{prop:PF}
Let $M$ be the matrix of Eq.~\eqref{eq:M}. Then:
\begin{enumerate}
\item $M$ is primitive; indeed every entry of $M$ itself is strictly
positive.
\item The characteristic polynomial is $x^{2} - 7x + 1$, with
eigenvalues
\[
\lambda_{\pm} \;=\; \frac{7 \pm 3\sqrt{5}}{2} \;=\; \gphi^{\pm 4},
\]
so the Perron--Frobenius eigenvalue $\lambda_{+} = \gphi^{4} \approx
6.854$ equals the area inflation, consistent with linear inflation
$\gphi^{2}$.
\item The Perron--Frobenius eigenvector, normalized so that the
single-hat component equals 1, is
\[
(v_{c},\, v_{s}) \;=\; \Bigl( \tfrac{3\sqrt{5}-5}{10},\ 1 \Bigr)
\;\approx\; (0.170820,\ 1).
\]
\item Since each compound contains exactly one reflected hat, the
asymptotic reflected-hat frequency is
\begin{equation}
f_{r} \;=\; \frac{v_{c}}{2 v_{c} + v_{s}}
\;=\; \frac{3 - \sqrt{5}}{6} \;\approx\; 0.127322 ,
\label{eq:fr}
\end{equation}
equivalently $(1-f_{r})/f_{r} = \gphi^{4}$: unreflected and reflected
hats occur in the ratio $\gphi^{4} : 1$.
\item For every strictly positive initial census vector $v$, the
normalized iterates $M^{n} v / \lVert M^{n} v \rVert$ converge to the
Perron--Frobenius direction, and the reflected fraction converges to
$f_{r}$ exponentially with the exact rate
\begin{equation}
\begin{aligned}
\bigl| f_{r}^{(n)} - f_{r} \bigr|
&\;=\; O\!\left( \bigl( \lambda_{-}/\lambda_{+} \bigr)^{n} \right)
\;=\; O\!\left( \gphi^{-8n} \right),\\[2pt]
\gphi^{-8} &\;\approx\; 0.0212862 .
\end{aligned}
\label{eq:rate}
\end{equation}
\end{enumerate}
\end{proposition}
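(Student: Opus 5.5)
The plan is to verify items 1--4 by direct computation and to derive item 5 from the spectral decomposition of a diagonalizable $2\times 2$ matrix. The main point to settle first is the convention: the census vector is a column $v=(n_c,n_s)^{\mathsf T}$, and $M$ acts on it as written in Eq.~\eqref{eq:M}. Items 3 and 4 rest entirely on this choice, so I will fix it at the outset and check it against the displayed eigenvector.

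\emph{Items 1--2.} All four entries of $M$ are positive, so $M$ is primitive with $M^1>0$. Since $\operatorname{tr}M=7$ and $\det M=6-5=1$, the characteristic polynomial is $x^2-7x+1$. Its roots are $(7\pm 3\sqrt5)/2$. Using $\gphi^2=\gphi+1$ one gets $\gphi^4=3\gphi+2=(7+3\sqrt5)/2$, and $\lambda_-=1/\lambda_+=\gphi^{-4}$ because $\det M=1$. Area scales by the square of the linear factor, so linear inflation $\gphi^2$ gives area inflation $\gphi^4$.

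\emph{Item 3.} The first row of $(M-\lambda_+)v=0$ reads $(1-\lambda_+)v_c+v_s=0$. With $v_s=1$ this gives
\[
v_c=\frac{1}{\lambda_+-1}=\frac{2}{5+3\sqrt5}=\frac{3\sqrt5-5}{10}.
\]
As a consistency check I will verify the second row, $5v_c+6=\lambda_+$; this also confirms the column convention. Both components are positive, as Perron--Frobenius requires.

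\emph{Item 4.} A compound contains two hats, one of them reflected, and a single contains one unreflected hat. Hence the reflected fraction is $v_c/(2v_c+v_s)$. Substituting, $2v_c+v_s=3\sqrt5/5$, so
\[
f_r=\frac{3\sqrt5-5}{6\sqrt5}=\frac{3-\sqrt5}{6}.
\]
Then $(1-f_r)/f_r=(3+\sqrt5)/(3-\sqrt5)=(7+3\sqrt5)/2=\gphi^4$.

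\emph{Item 5.} The eigenvalues are distinct, so $M$ is diagonalizable. Write $v=a\,u_++b\,u_-$, with $u_\pm$ the right eigenvectors. The coefficient is $a=\langle \ell_+,v\rangle/\langle \ell_+,u_+\rangle$, where $\ell_+$ is the left PF eigenvector. This coefficient is strictly positive because $\ell_+$, $u_+$ and $v$ are all positive. Then
\[
M^nv=\lambda_+^n a\bigl(u_++(b/a)(\lambda_-/\lambda_+)^n u_-\bigr),
\]
so the normalized direction converges to $u_+$ with error $O((\lambda_-/\lambda_+)^n)=O(\gphi^{-8n})$.

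The reflected fraction is the linear-fractional map $F(x,y)=x/(2x+y)$. This map is homogeneous of degree zero and smooth near the positive ray. It is therefore locally Lipschitz in the direction, which transfers the same rate to $|f_r^{(n)}-f_r|$. For the numerical value, $\gphi^{-8}=\lambda_-^2=\bigl((7-3\sqrt5)/2\bigr)^2=(47-21\sqrt5)/2\approx 0.02129$.

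The only subtle point is the word ``exact''. The rate is attained precisely when $b\neq 0$ and $u_-$ is not tangent to the level set of $F$. The second condition holds because $F(u_-)$ is undefined or negative, since $u_-$ has mixed signs; the first holds unless $v$ is already a PF vector. I will state the $O(\cdot)$ bound as given and remark that it is sharp for generic $v$.
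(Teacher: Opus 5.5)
Your proposal is correct and follows the paper's proof. Items 1--4 use the same trace/determinant and eigenvector checks ($v_c+1=\gphi^{4}v_c$, $5v_c+6=\gphi^{4}$, $(1-f_r)/f_r=(v_c+v_s)/v_c$), and item 5 uses the quantitative Perron--Frobenius argument for a diagonalizable positive matrix; your positivity of $a$ via the left eigenvector and your degree-zero Lipschitz transfer to $F(x,y)=x/(2x+y)$ spell out what the paper only asserts.

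There is one factual slip in your optional sharpness remark. The vector $u_-=\bigl(-\tfrac{5+3\sqrt5}{10},\,1\bigr)$ has $2x+y=-\tfrac{3\sqrt5}{5}<0$, so $F(u_-)=\tfrac{3+\sqrt5}{6}>0$, which is neither undefined nor negative. The non-tangency you need holds for a different reason: the level set of the degree-zero map $F$ through $u_+$ is the line $\RR u_+$, and $u_-$ is not parallel to $u_+$ because the eigenvalues are distinct (equivalently, $F(u_-)\neq F(u_+)$). Hence the rate $\gphi^{-8n}$ is attained exactly when $b\neq 0$, which is the conclusion you stated.
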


\begin{proof}
(1) is immediate from Eq.~\eqref{eq:M}. (2) and (3) are direct
computations: $\operatorname{tr} M = 7$, $\det M = 1$, and
$M (v_{c}, 1)^{\mathsf T} = \gphi^{4} (v_{c}, 1)^{\mathsf T}$ is
checked by $v_{c} + 1 = \gphi^{4} v_{c}$ together with
$5 v_{c} + 6 = \gphi^{4}$. (4) counts one reflected hat per compound
among $2 v_{c} + v_{s}$ hats per census unit; the ratio identity
follows from $(v_{c} + v_{s})/v_{c} = (7 + 3\sqrt{5})/2$. (5) is the
quantitative Perron--Frobenius theorem for a diagonalizable positive
matrix: the error contracts by the eigenvalue ratio
$\lambda_{-}/\lambda_{+} = \gphi^{-8}$ per substitution step.
\end{proof}

Figure~\ref{fig:pf} shows the numerical verification: convergence of
$f_{r}^{(n)}$ to $(3-\sqrt{5})/6$ from three unrelated seeds, with the
measured per-step error ratio matching $\gphi^{-8} = 0.0212862\ldots$
to eight digits by generation 5.

\begin{figure}[t]
\centering
\includegraphics[width=0.92\linewidth]{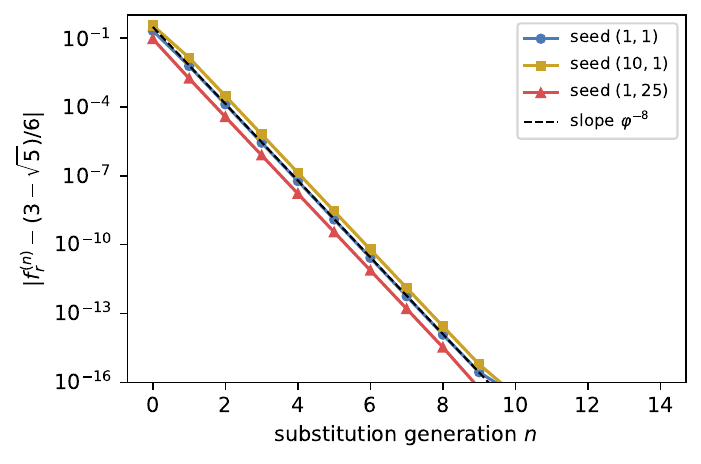}
\caption{Convergence of the reflected-hat fraction
$f_{r}^{(n)}$ to $(3-\sqrt{5})/6$ under iteration of the
census matrix $M$, independently of the starting configuration
(three unrelated seeds shown). The slope on the logarithmic scale is
the exact Perron--Frobenius rate $\gphi^{-8}$ of
Eq.~\eqref{eq:rate}.}
\label{fig:pf}
\end{figure}

\subsection{Strong local indistinguishability within each chirality
class}
\label{sec:hat-li}

The next theorem is the load-bearing statement of this section. It
holds for \emph{every} Hat tiling, with no restriction to a
substitution-generated subfamily; the unique-hierarchy theorem is
what makes this possible. Its correct scope, however, is \emph{per
chirality class}: patch frequencies are uniform within $\THat^{+}$
and within $\THat^{-}$, and the two classes are separated by them.

\begin{theorem}[Strong local indistinguishability, Hat, within each class]
\label{thm:LI}
Every finite legal patch $P$ possesses a well-defined frequency in
every Hat tiling, uniform over each chirality class:
\[
\freq_{T}(P) \;=\; \phi^{\pm}_{P}
\qquad \text{for all } T \in \THat^{\pm},
\]
with $\phi^{-}_{P} = \phi^{+}_{\bar P}$, where $\bar P$ denotes
the mirror image of the patch $P$. The two classes are distinct: for
chirality-sensitive patches $\phi^{+}_{P} \neq \phi^{+}_{\bar P}$,
the extreme case being single tiles, with ratio $\gphi^{4} : 1$.
Consequently, hypothesis (H1) of Section~\ref{sec:framework} holds
\emph{within each class}: the reduced density matrix $\rhoK^{\pm}$
on any bounded region $K$ depends on the class but not otherwise on
the representative $[T]$.
\end{theorem}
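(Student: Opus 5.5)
The plan is to derive uniform patch frequencies from the unique-hierarchy theorem together with primitivity of the substitution, in the standard Solomyak way, and to do the bookkeeping per chirality class. Fix $T \in \THat^{+}$. By the unique-hierarchy theorem, for each $n$ the tiling $T$ decomposes uniquely into level-$n$ supertiles. Each supertile is the $n$-fold inflation image of one of the four metatiles $H,T,P,F$, taken up to a \emph{proper} isometry; reflected supertiles would have to be handled separately. The first step is to check that, within one chirality class, all level-$n$ supertiles of a given type are congruent by proper isometries. This rests on the fact that the substitution in \cite{Smith2024hat} maps the hat-majority metatiles to hat-majority metatiles, and on the ratio $\gphi^{4}:1$ of Proposition~\ref{prop:PF}. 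It is this step that makes the frequencies depend only on the class.

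Next, for a finite legal patch $P$, let $N_{P}(S_{n})$ be the number of occurrences of $P$, up to the relevant group of translations and rotations, inside a level-$n$ supertile $S_n$ of type $X$. This count depends only on $(X,n)$. The occurrences of $P$ in a large ball $B_R$ split into three kinds: those lying inside a single level-$n$ supertile contained in $B_R$, those straddling supertile boundaries, and those lying in boundary supertiles. I would bound the second and third kinds by $O(\mathrm{diam}(P)\cdot \gphi^{2n}\cdot R/\gphi^{2n}) + O(R\gphi^{2n})$. To do this I use that level-$n$ supertiles have diameter of order $\gphi^{2n}$ and perimeter of order $\gphi^{2n}$. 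For the Hat metatiles, which have fractal-free polygonal outlines up to bounded fuzz, this holds with bounded constants. Taking $R\to\infty$ first and then $n\to\infty$, the frequency is
\[
\freq_{T}(P) \;=\; \lim_{n\to\infty} \sum_{X} \nu_{X}\, \frac{N_{P}(X,n)}{\gphi^{4n}\,\mathrm{area}(X)} ,
\]
where $\nu_X$ is the Perron--Frobenius frequency vector of the four-metatile substitution matrix. By primitivity, $\nu_X$ is the same for every tiling in the class. The census version, Proposition~\ref{prop:PF}(5), already gives this convergence for single-tile species. The full four-type matrix is primitive as well, so the same Perron--Frobenius argument gives existence of the limit, via the exponential convergence of $N_P(X,n)/\lambda^n$ to $r_P \ell_X$. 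Since the limit is independent of $T \in \THat^{+}$, I denote it $\phi^{+}_{P}$.

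For $\THat^{-}$, I would use that $\THat^{-}$ is the set of mirror images of $\THat^{+}$. If $T=\sigma T'$ with $\sigma$ a reflection, then occurrences of $P$ in $T$ correspond bijectively and area-preservingly to occurrences of $\bar P$ in $T'$, which gives $\phi^{-}_{P}=\phi^{+}_{\bar P}$. Distinctness follows from the single-tile case. Proposition~\ref{prop:PF}(4) gives hat and anti-hat frequencies in the ratio $\gphi^{4}:1$, so $\phi^{+}_{\text{hat}} \neq \phi^{+}_{\text{anti-hat}}$. For the (H1) consequence, I would write $\rhoK$ for the group-averaged state as an integral over $G$ of the projector onto the patch $gT|_K$. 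By the standard ergodic averaging argument of \cite{LiBoyle2024}, with translations averaged by van Hove sequences, this integral equals the frequency-weighted mixture $\sum_{P}\phi^{\pm}_{P}\,\mathrm{vol}\cdot|P\rangle\langle P|$ over patches seen in $K$. It therefore depends only on the class.

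I expect the main obstacle to be the boundary estimate, together with the claim that every tiling in $\THat^{+}$, not just substitution-generated ones, has the full level-$n$ counts controlled uniformly. Unique hierarchy gives the decomposition. What remains to check is that the supertile boundaries, which are jagged unions of hats, have length $O(\gphi^{2n})$, since only then is the boundary term negligible against the area $\gphi^{4n}$. I would try to bound this directly from the recursive structure: the boundary of a level-$(n+1)$ supertile is contained in the union of the boundaries of boundedly many level-$n$ supertiles. This gives a perimeter growth factor of at most $c$ per step, and I would then need $c<\gphi^{4}$. Proving that inequality, perhaps from the explicit metatile data, is the point where I expect to have to work.
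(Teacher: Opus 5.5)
Your proposal follows the paper's proof essentially step for step: counts of $P$ inside level-$n$ supertiles depend only on type and level, Perron--Frobenius equidistribution for the primitive metatile substitution gives a common limit, boundary copies are negligible, the mirror class is handled by reflecting the hierarchy, and distinctness is read off from the single-tile ratio $\gphi^{4}:1$. The perimeter estimate you single out as the main obstacle is exactly the content of the paper's Step~2, which simply asserts without further proof that copies of $P$ near the level-$n$ boundary network number $O(R^{2}\gphi^{-2n})$ in $B_{R}$; note that this, not the $O(\mathrm{diam}(P)\,R)$ your count produces, is the correct size of the straddling term, and it is the reason the limit $n\to\infty$ is genuinely needed.
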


\begin{proof}
Fix a finite patch $P$ and $T \in \THat^{+}$. By the unique-hierarchy
theorem \cite[Thm.~5.1]{Smith2024hat}, for every $n$ the tiling $T$
decomposes uniquely into level-$n$ supertiles, drawn from the finite
set of level-$n$ supertile types generated by the metatile
substitution.

\emph{Step 1: counting inside supertiles.}
Let $c_{P}^{(n)}(\tau)$ denote the number of copies of $P$ contained
in the interior of a level-$n$ supertile of type $\tau$. Because the
substitution is deterministic, $c_{P}^{(n)}(\tau)$ depends only on
$(n, \tau)$, not on the ambient tiling. The census of supertile types
inside higher supertiles is governed by the primitive matrix of the
metatile substitution, so by the Perron--Frobenius theorem the ratio
$c_{P}^{(n)}(\tau) / \operatorname{area}(\tau^{(n)})$ converges as
$n \to \infty$ to a limit $\phi_{P}$ that is independent of $\tau$
(different types equidistribute at a rate governed by the subleading
eigenvalue, exactly as in Proposition~\ref{prop:PF}(5)).

\emph{Step 2: boundary copies are negligible.}
Copies of $P$ in $T$ that are \emph{not} interior to any level-$n$
supertile must lie within distance $\operatorname{diam} P$ of the
boundary network of the level-$n$ decomposition. Within a ball
$B_{R}$, the number of such positions is $O(R^{2} \gphi^{-2n})$,
vanishing relative to $R^{2}$ as $n \to \infty$.

\emph{Step 3: frequency.}
Combining Steps 1 and 2, for every $x$ and every $T$,
\[
\frac{N_{P}(T, B_{R}(x))}{\pi R^{2}}
\;\xrightarrow[R \to \infty]{}\; \phi_{P},
\]
with the convergence uniform in $x$: the count over $B_{R}$ is
sandwiched between supertile-interior counts at level $n(R)$ chosen so
that $\gphi^{2n} = o(R)$, plus the vanishing boundary correction. The
limit was defined without reference to $T$, so
$\freq_{T_{1}}(P) = \freq_{T_{2}}(P) =: \phi^{+}_{P}$ for all
$T_{1}, T_{2} \in \THat^{+}$.

\emph{Step 4: the mirror class.}
A tiling $T \in \THat^{-}$ is the mirror image of a tiling in
$\THat^{+}$ and composes uniquely into the mirrored metatile system;
applying Steps 1--3 to the mirrored substitution gives
$\freq_{T}(P) = \phi^{+}_{\bar P} =: \phi^{-}_{P}$. The two
frequency systems are genuinely different:
$\phi^{+}_{\mathrm{hat}} / \phi^{+}_{\mathrm{anti\text{-}hat}}
= \gphi^{4}$ by Proposition~\ref{prop:PF}(4), while the mirrored
ratio is $\gphi^{-4}$.

Finally, identical patch frequencies within a class are precisely the
strong local indistinguishability hypothesis of Li and Boyle
\cite{LiBoyle2024}, under which the reduced density matrices
coincide class-wise: $\rho_{K, [T]} = \rhoK^{\pm}$ for
$T \in \THat^{\pm}$.
\end{proof}

Theorem~\ref{thm:LI} is what makes a code possible at all: it is
hypothesis (H1), the statement that no bounded measurement can
identify the codeword. That it holds for \emph{every} Hat tiling ---
not merely for a substitution-generated subfamily --- is what lets
the code live on the full family; and its per-class form is not a
technical blemish but the origin of the sector structure of
\S\ref{sec:hat-code}.

\begin{corollary}
\label{cor:LI}
The reflected-tile frequency is $(3-\sqrt{5})/6$ in every tiling of
$\THat^{+}$ and $(3+\sqrt{5})/6$ in every tiling of $\THat^{-}$.
The reflection-symmetrized frequencies
$\freq_{T}(P) + \freq_{T}(\bar P)$ are uniform over all of
$\THat$.
\end{corollary}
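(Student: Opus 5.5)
The corollary follows directly from Theorem~\ref{thm:LI} combined with Proposition~\ref{prop:PF}(4); I would prove its two sentences separately.

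\emph{First sentence.} Take the patch $P$ to be a single tile up to proper isometry. Since frequencies are counted per unit area and are additive over disjoint finite families of patches, summing over the finitely many orientations of the hat and of the anti-hat gives $\freq_T(\mathrm{hat})$ and $\freq_T(\mathrm{anti\text{-}hat})$. Both are well defined and uniform on $\THat^{+}$ by Theorem~\ref{thm:LI}. The reflected-tile frequency is the fraction of tiles that are reflected,
\[
\frac{\freq_T(\mathrm{anti\text{-}hat})}{\freq_T(\mathrm{hat})+\freq_T(\mathrm{anti\text{-}hat})}.
\]
It is therefore a well-defined number, and it is the same for every $T \in \THat^{+}$. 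To identify its value I would use the census of Proposition~\ref{prop:PF}. The level-$n$ supertiles have reflected fraction $f_r^{(n)}$, and this converges to $(3-\sqrt5)/6$ by Proposition~\ref{prop:PF}(5). Steps~1--3 in the proof of Theorem~\ref{thm:LI} show that the tile counts inside supertiles determine the limiting frequency, because boundary tiles are negligible. The value is therefore $(3-\sqrt5)/6$ on $\THat^{+}$. This uses the convention, already built into the class labels, that $\THat^{+}$ is the hat-majority class.

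For $\THat^{-}$, I would use the relation $\phi^{-}_P=\phi^{+}_{\bar P}$ from Theorem~\ref{thm:LI}. Mirroring swaps the roles of hat and anti-hat, so the reflected fraction becomes $1-(3-\sqrt5)/6=(3+\sqrt5)/6$.

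\emph{Second sentence.} For $T\in\THat^{+}$, Theorem~\ref{thm:LI} gives $\freq_T(P)+\freq_T(\bar P)=\phi^{+}_P+\phi^{+}_{\bar P}$. For $T\in\THat^{-}$ it gives $\phi^{-}_P+\phi^{-}_{\bar P}=\phi^{+}_{\bar P}+\phi^{+}_{P}$, using $\bar{\bar P}=P$. The two expressions are equal, so the symmetrized frequency is uniform on $\THat=\THat^{+}\sqcup\THat^{-}$.

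The main point needing care is the first sentence: the per-unit-area frequencies of Theorem~\ref{thm:LI} have to be matched with the census fraction of Proposition~\ref{prop:PF}(4). This comes down to the additivity of frequencies over tile types and to the fact that, in the limit, the ratio of frequencies equals the ratio of counts within large supertiles. The second sentence is pure bookkeeping.
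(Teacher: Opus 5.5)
Your proof is correct and follows the paper's route. The paper states the corollary without a separate proof, as an immediate consequence of Theorem~\ref{thm:LI} (class-wise uniformity and $\phi^{-}_{P}=\phi^{+}_{\bar P}$) and the census value of Proposition~\ref{prop:PF}(4), which Step~4 of that theorem's proof already invokes; the symmetrized statement is exactly your bookkeeping with $\bar{\bar P}=P$, and your explicit passage from per-area frequencies to a tile fraction just makes precise what the paper leaves implicit.
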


\begin{lemma}[The classes are locally separated]
\label{lem:separation}
If $T^{+} \in \THat^{+}$ and $T^{-} \in \THat^{-}$, then $T^{+}$
and $T^{-}$ do not agree outside any bounded region.
\end{lemma}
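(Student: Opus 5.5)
We argue by contradiction, using the fact that patch frequencies are insensitive to modifications on a bounded set. Suppose $T^{+} \in \THat^{+}$ and $T^{-} \in \THat^{-}$ agree outside some bounded region $K$. The natural observable is the single reflected tile, i.e.\ the anti-hat. By Corollary~\ref{cor:LI}, its frequency is $(3-\sqrt{5})/6$ in $T^{+}$ and $(3+\sqrt{5})/6$ in $T^{-}$. These values differ by $\sqrt{5}/3 \neq 0$.

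We then compare the counts $N(T^{\pm}, B_{R}(0))$ of anti-hats in large balls. Tiles lying entirely outside $K$ are identical in the two tilings. Tiles meeting $K$ are confined to a neighbourhood of $K$ of width $\operatorname{diam}(\text{hat})$, so each tiling has at most $C_{K} := \operatorname{area}(K_{+\operatorname{diam}})/\operatorname{area}(\text{hat})$ of them. The same bounded-width argument handles tiles straddling $\partial B_{R}$, because the two tilings agree there once $R$ is large. Hence
\[
\bigl|N(T^{+}, B_{R}) - N(T^{-}, B_{R})\bigr| \le 2 C_{K}
\]
uniformly in $R$. Dividing by $\pi R^{2}$ and letting $R \to \infty$, the two anti-hat frequencies must coincide, which contradicts $\sqrt{5}/3 \neq 0$.

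Theorem~\ref{thm:LI} is needed only to guarantee that both limits exist and equal the class values. Since that convergence was shown to be uniform in the centre $x$, no subtlety about the choice of averaging sequence arises. The same argument works with any chirality-sensitive patch $P$ satisfying $\phi^{+}_{P} \neq \phi^{+}_{\bar P}$.

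I expect the only delicate point to be definitional rather than mathematical. One must be sure that ``agree outside $K$'' means the tile sets coincide for all tiles disjoint from $K$ (equivalently, not meeting some bounded enlargement of $K$), so that tiles partially overlapping $K$ are among the boundedly many exceptions. With that convention the bound $2C_{K}$ is immediate, and the proof reduces to the observation that frequencies are tail invariants while the class label is determined by a frequency.
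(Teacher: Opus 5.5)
Your proof is correct and is essentially the paper's own argument. Both use the same two facts: patch frequencies are unchanged by a bounded modification, and the two classes carry different frequency systems (Theorem~\ref{thm:LI}); you instantiate this with the anti-hat and make the bounded count difference $2C_K$ explicit, where the paper invokes the whole frequency system at once. One small normalization point: $(3\mp\sqrt{5})/6$ are fractions of tiles rather than densities per unit area. They differ from the densities only by the common factor $1/\operatorname{area}(\text{hat})$, so your contradiction is unaffected.
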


\begin{proof}
Patch frequencies are density limits and are therefore unchanged by
the removal of a bounded set: the restriction $T|_{K^{c}}$ already
determines $\freq_{T}(P)$ for every $P$. If $T^{+}$ and $T^{-}$
agreed on $K^{c}$, the two tilings would share all patch
frequencies, contradicting the distinctness of the two frequency
systems in Theorem~\ref{thm:LI}.
\end{proof}

\begin{remark}
The proof of Theorem~\ref{thm:LI} deliberately avoids the question of
whether $\THat^{+}$ coincides, as a dynamical system, with the
minimal inflation hull $\OHat$: patch frequencies are computed
directly from the hierarchy, so hull membership is never invoked. The
identification $\THat^{+} = \OHat$ (with the mirrored statement for
$\THat^{-}$) is nevertheless true and useful
elsewhere (e.g.\ for importing the model-set description of
\cite{BGS2025}); it follows from the unique-hierarchy theorem
together with the standard finite check that every legal supertile
adjacency occurs inside some higher-order supertile --- a check
implicit in the Anderson--Putnam complex underlying the cohomology
computation of \cite{BGS2025}.
\end{remark}

\subsection{Local recoverability}
\label{sec:hat-rec}

Local recoverability is where the Hat differs most instructively from
the Penrose tiling. Li and Boyle prove Penrose recoverability in one
stroke using Ammann lines: each of the five families of parallel
lines is visible outside $K$ and extends uniquely through it, and the
completed Ammann grid determines the tiling
\cite[App.~B.1]{LiBoyle2024}. The Hat has no known analogue of the
Ammann lines, so we argue instead along the lines of their
one-dimensional (Fibonacci) treatment: composition (deflation)
reduces a bounded erasure to a residual problem of uniformly bounded
size, and the residual problem is then analysed by other means. The
Fibonacci case also calibrates what can possibly be true: Li and
Boyle prove recoverability of Fibonacci quasilattices \emph{except
for the singular ones}, and the exception is unavoidable --- the two
singular Fibonacci strings are reflection-symmetric strings that
differ precisely in their two central digits
\cite[App.~B.3, footnote~3]{LiBoyle2024}, i.e.\ they are two distinct
legal tilings that agree outside a \emph{bounded} region. Any
correct recoverability theorem must therefore either exclude
singular tilings or prove that the Hat, unlike Fibonacci, has no
bounded singular pairs. Figure~\ref{fig:fibsing} shows the mechanism behind
the Fibonacci exception.

\begin{figure}[t]
\centering
\includegraphics[width=\linewidth]{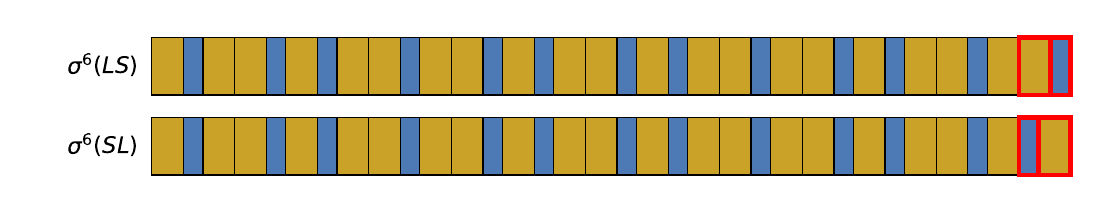}
\caption{Why recoverability can fail: the Fibonacci mechanism. Gold
cells are long tiles ($L$), blue cells short ($S$). The inflations of
the two-letter seeds $LS$ and $SL$ agree except in their final two
letters, at every order (shown: $\sigma^{6}$, verified
symbolically); in the limit this produces two distinct, perfectly
legal quasilattices that differ only in two central tiles
\cite[App.~B.3]{LiBoyle2024}. Erasing those two tiles is an
uncorrectable erasure. The open Lemma~\ref{lem:defect} asks whether
the Hat admits such a bounded ``central swap'' or, like Penrose,
does not.}
\label{fig:fibsing}
\end{figure}

\subsubsection{The reduction}

\begin{proposition}[Reduction to bounded defects]
\label{prop:reduction}
Suppose $T_{1}, T_{2} \in \THat$ agree outside a bounded region $K$.
Then for every sufficiently large $n$, the level-$n$ compositions
$T_{1}^{(n)}, T_{2}^{(n)}$ are legal metatile tilings that agree
outside a region of radius at most $R^{*} + 1$ in level-$n$ tile
units, where
\begin{equation}
R^{*} \;=\; \frac{R_{0}}{\gphi},
\label{eq:Rstar}
\end{equation}
and $R_{0}$ is the locality radius of the composition rule. Moreover
$T_{1} = T_{2}$ if and only if $T_{1}^{(n)} = T_{2}^{(n)}$.
\end{proposition}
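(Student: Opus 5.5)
The plan is to induct on the composition level, using only two inputs: the unique-hierarchy theorem \cite[Thm.~5.1]{Smith2024hat} and the locality of the composition rule. By the latter I mean that there is a radius $R_{0}$ such that the supertile containing a given tile of a level-$k$ tiling, together with its position inside that supertile, is determined by the tiles within distance $R_{0}$ of it, measured in level-$k$ tile units. This radius is uniform in $k$ because the same finite metatile substitution acts at every level.

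Legality is immediate. Both $T_{1}$ and $T_{2}$ lie in $\THat$, so by the unique-hierarchy theorem each has a unique level-$n$ composition, and that composition is a legal tiling by level-$n$ metatiles.

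The main step is a one-level recursion for the radius of disagreement. Let $r_{k}$ be the radius, in level-$k$ units, of a ball outside which $T_{1}^{(k)}$ and $T_{2}^{(k)}$ agree, and start from $r_{0} = \operatorname{diam} K$ in hat units. Consider a level-$(k+1)$ supertile whose $R_{0}$-neighbourhood, at level $k$, misses the disagreement ball. By locality its composition is identical in the two tilings, so every level-$(k+1)$ disagreement sits within $r_{k} + R_{0}$ level-$k$ units of the centre. Rescaling by the linear inflation factor gives
\[
r_{k+1} \;\le\; \frac{r_{k} + R_{0}}{\lambda} + c,
\]
where $\lambda$ is the linear inflation per composition step and the additive constant $c$ accounts for supertiles that straddle the boundary. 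Iterating this affine contraction drives $r_{k}$ toward its fixed point $R_{0}/(\lambda - 1)$, up to $O(1)$.

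For the half-step convention implicit in the statement, where each step rescales by $\gphi$, the fixed-point bound reads $R_{0}/(\gphi - 1) = \gphi R_{0}$. The asserted $R^{*} = R_{0}/\gphi$ is smaller than this, so I expect to have to fix conventions carefully: either $R_{0}$ is measured in units of the finer level, or the extra ``$+1$'' is meant to absorb the slack. The way I would close this gap is to choose the normalization of $R_{0}$ so that the recursion becomes $r_{k+1} \le (r_{k}+1)/\gphi + R_{0}/\gphi$, and then take $n$ large enough that the initial term $r_{0}\gphi^{-n}$ is below $1$. \textbf{This bookkeeping of constants is the main obstacle}: the qualitative statement, a bound uniform in $K$, follows cleanly, but matching the stated constant $R^{*}+1$ depends on these normalization choices.

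For the ``moreover'' clause, one direction is trivial: if $T_{1} = T_{2}$, uniqueness of the composition gives $T_{1}^{(n)} = T_{2}^{(n)}$. For the converse, the level-$n$ composition records each supertile's type and placement, and the substitution is deterministic. Applying the substitution $n$ times to $T_{1}^{(n)} = T_{2}^{(n)}$ therefore recovers $T_{1}$ and $T_{2}$ tile by tile, so $T_{1} = T_{2}$.
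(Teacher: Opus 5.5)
Your route is the paper's own: iterate the locality of the composition rule one level at a time, sum the resulting geometric series (your affine recursion is the same computation), and obtain the ``moreover'' clause from the bijectivity of composition. But the ``main obstacle'' you flag comes from a wrong inflation factor, and the repair you propose would not work.

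One composition step of the Hat hierarchy rescales lengths by $\lambda = \gphi^{2}$, not $\gphi$. That is the linear inflation stated in the paper; $\gphi^{4}$, the Perron--Frobenius eigenvalue of $M$, is the \emph{area} factor. With $\lambda = \gphi^{2}$, your own fixed point $R_{0}/(\lambda-1)$ equals $R_{0}/(\gphi^{2}-1) = R_{0}/\gphi = R^{*}$ exactly, because $\gphi^{2} = \gphi + 1$. This is precisely the paper's computation $R_{0}(1+\gphi^{2}+\cdots+\gphi^{2(n-1)}) = \tfrac{R_{0}}{\gphi}(\gphi^{2n}-1)$. The renormalized recursion you suggest instead, $r_{k+1} \le (r_{k}+1)/\gphi + R_{0}/\gphi$, has fixed point $\gphi(R_{0}+1)$. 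It therefore cannot produce $R_{0}/\gphi$ under any normalization.

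The second issue is the additive constant $c$ that you charge after rescaling at every level. It does not wash out: it contributes $c\lambda/(\lambda-1) = c\gphi$ to the limit. Your recursion therefore lands at $R^{*} + c\gphi$, which exceeds $R^{*}+1$ unless $c \le 1/\gphi$.

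The paper avoids this by phrasing locality pointwise in absolute coordinates: the level-$n$ structure at $x$ is fixed by the tiles within distance $\tfrac{R_{0}}{\gphi}(\gphi^{2n}-1)$ of $x$. The disagreement set then lies in $K + B\bigl(\tfrac{R_{0}}{\gphi}\gphi^{2n}\bigr)$. Its radius in level-$n$ units is $R^{*} + r_{K}\gphi^{-2n}$, with $r_{K}$ the radius of $K$, so the $+1$ is just slack for large $n$.

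If you prefer to work tile-by-tile, fold the per-level boundary margin into the definition of $R_{0}$ (the paper only bounds $R_{0}$ by a few hat diameters) rather than adding it after rescaling. With that change and $\lambda = \gphi^{2}$, your argument goes through as written.
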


\begin{proof}
By \cite[Thm.~5.1]{Smith2024hat} the composition of any Hat tiling into
metatiles exists, is unique, and is \emph{local}: the metatile
membership of a tile at $x$ is determined by the tiles within distance
$R_{0}$ of $x$, for a constant $R_{0}$ fixed by the finite case
analysis of the composition proof. The constant is explicit: the
classification rules of \cite[\S 4]{Smith2024hat}, in the form
implemented by their ancillary verification suite
(\texttt{matching.py}), assign each hat its metatile from its
\emph{first corona} alone, and the accompanying exhaustive
certificate (\texttt{2patches.txt}) enumerates all second-corona
configurations. Since the hat's diameter is exactly $3$ in
kisrhombille units, the first corona lies within distance
$\tfrac{9}{2}$ of the tile's centre, so $R_{0}$ is at most a few
hat diameters; only the fixed scale, not the constant, enters what
follows. Iterating, the level-$n$
supertile structure at $x$ is determined by the tiles within distance
\begin{align*}
R_{0} \bigl( 1 + \gphi^{2} + \cdots + \gphi^{2(n-1)} \bigr)
&\;=\; R_{0}\, \frac{\gphi^{2n} - 1}{\gphi^{2} - 1}\\
&\;=\; \frac{R_{0}}{\gphi} \bigl( \gphi^{2n} - 1 \bigr),
\end{align*}
using $\gphi^{2} - 1 = \gphi$. Hence the level-$n$ decompositions of
$T_{1}$ and $T_{2}$ agree outside
$K^{(n)} = K + B\bigl( \tfrac{R_{0}}{\gphi} \gphi^{2n} \bigr)$, whose
radius in level-$n$ units converges to $R^{*}$, independently of $K$.
Since composition is a bijection on tilings (unique hierarchy),
$T_{1} = T_{2}$ iff the level-$n$ compositions coincide.
\end{proof}

The reduction is exactly the two-dimensional form of Li and Boyle's
deflation argument for Fibonacci \cite[App.~B.3 and
Prop.~3]{LiBoyle2024}, whose residual ambiguity (``a single
$01\leftrightarrow 10$ swap'') is the Fibonacci singular pair: a
bounded defect that survives deflation at every level. The fixed
point $R^{*} > 0$ in Eq.~\eqref{eq:Rstar} is not an artefact of the
method; it is the room in which such singular pairs live, when they
exist. Figure~\ref{fig:fixedpoint} shows the convergence.

\begin{figure}[t]
\centering
\includegraphics[width=0.92\linewidth]{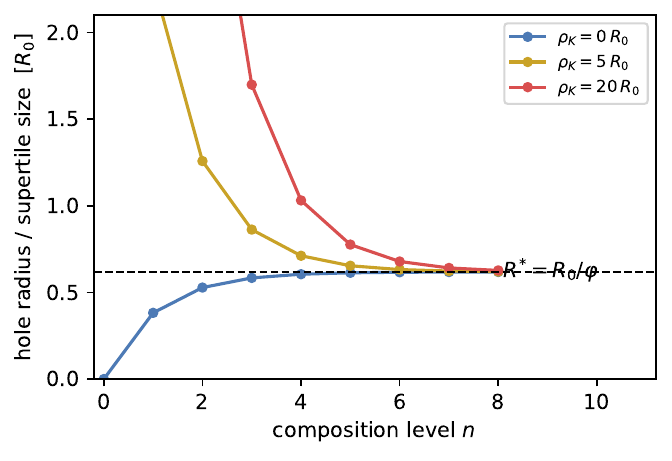}
\caption{The reduction of Proposition~\ref{prop:reduction}:
composing $n$ times, the region where the two tilings' supertile
structures may disagree shrinks \emph{relative to the supertile
size} and converges to the universal fixed point
$R^{*} = R_{0}/\gphi$, independently of the size $\rho_{K}$ of the
erased region. Deflation compresses every bounded erasure to this
scale --- and no further.}
\label{fig:fixedpoint}
\end{figure}

\subsubsection{Recoverability of nonsingular Hat tilings}

The hull $\OHat$ is topologically conjugate to the hull of the CAP
tiling, which arises from a $4{:}2$ Euclidean cut-and-project scheme
with regular windows \cite{BGS2025}; parts of the window boundaries
are fractal \cite[Fig.~10]{BGMM2025}. As for every regular model
set, the system is almost automorphic: there is a continuous,
translation-equivariant \emph{torus parametrization}
$\beta : \OHat \to \mathbb{T}$ onto the compact abelian group
$\mathbb{T}$ underlying the cut-and-project scheme
\cite[Def.~4.2 and Prop.~4.3]{Schlottmann}, and $\beta$ is
one-to-one precisely over the (full Haar measure, flow-invariant) set
of \emph{nonsingular} parameters --- called \emph{generic} in
\cite{Schlottmann}, whose Eq.~(4.8) and Theorem~4.5 establish
exactly this almost-1-1 structure, together with unique ergodicity
and pure-point spectrum for every regular model set; see also
\cite[Ch.~7]{BaakeGrimm}. A tiling is called nonsingular if its
parameter is. Concretely, $\beta(T)$ is the tiling's
\emph{phase}: the position of the cut through the higher-dimensional
lattice that produces $T$. Any unbounded portion of a tiling pins its
phase exactly, while the content of a bounded window cannot shift it
--- and this rigidity is what the proof exploits.

\begin{theorem}[Recoverability, nonsingular case]
\label{thm:recover}
Let $T_{1}, T_{2} \in \OHat$ agree outside a bounded region $K$, and
suppose at least one of them is nonsingular. Then $T_{1} = T_{2}$.
\end{theorem}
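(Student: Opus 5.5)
The plan is to show that two tilings agreeing outside a bounded region must have the same phase, and then to use injectivity of $\beta$ over nonsingular phases.

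\emph{Step 1: agreement at infinity forces equal phases.} Since $T_{1}$ and $T_{2}$ agree on $K^{c}$, the translates $T_{1}-x$ and $T_{2}-x$ coincide on the ball $B_{r}(0)$ whenever $|x|>r+\operatorname{rad}K$. In the local topology this means that for every sequence $x_{k}\to\infty$, the distance between $T_{1}-x_{k}$ and $T_{2}-x_{k}$ tends to $0$. Because $\beta$ is continuous on the compact hull $\OHat$, it is uniformly continuous. Hence $\beta(T_{1}-x_{k})-\beta(T_{2}-x_{k})\to 0$.

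Translation-equivariance gives $\beta(T_{i}-x)=\beta(T_{i})-\iota(x)$, where $\iota:\RR^{2}\to\mathbb{T}$ is the dense embedding of the translation action. The difference $\beta(T_{1}-x_{k})-\beta(T_{2}-x_{k})=\beta(T_{1})-\beta(T_{2})$ is therefore independent of $k$. It tends to zero, so it is zero, and $\beta(T_{1})=\beta(T_{2})$. This is the rigidity statement quoted before the theorem: an unbounded portion of a tiling pins its phase. No property of the Hat beyond the existence of the torus parametrization is needed.

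\emph{Step 2: injectivity over nonsingular parameters.} Suppose without loss of generality that $T_{1}$ is nonsingular. Then $\gamma=\beta(T_{1})$ is a nonsingular parameter, and by \cite[Thm.~4.5]{Schlottmann} the fiber $\beta^{-1}(\gamma)$ is the single point $\{T_{1}\}$. Since $\beta(T_{2})=\gamma$, we get $T_{2}=T_{1}$. Nonsingularity is a property of the parameter, not of the tiling, so if $T_{2}$ is the nonsingular one the same conclusion follows, and the hypothesis ``at least one of them'' suffices.

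\emph{Main obstacle.} The delicate point is transporting this from the CAP model set to $\OHat$ itself. \cite{BGS2025} gives a topological conjugacy $h:\OHat\to\Omega_{\mathrm{CAP}}$, and I would define $\beta=\beta_{\mathrm{CAP}}\circ h$. Two things then need checking.
\begin{enumerate}
\item Continuity and equivariance of $\beta$ survive composition. This is immediate, since $h$ is a conjugacy.
\item Agreement outside a bounded set is preserved by $h$. Step 1 never uses this: it only uses that $\operatorname{dist}(T_{1}-x,T_{2}-x)\to 0$ as $|x|\to\infty$ (asymptotic proximality), together with uniform continuity of $\beta$ on the compact hull.
\end{enumerate}
So I would phrase Step 1 purely via proximality and uniform continuity, which avoids any local-derivability question about $h$.

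One further check is that singularity of $T_{1}$ is meant as singularity of $\beta(T_{1})$ under this composite map. That is the paper's stated convention, so the argument closes.
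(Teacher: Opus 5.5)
Your proposal is correct and follows essentially the same route as the paper. Step 1 gets equal phases from asymptotic proximality, uniform continuity of $\beta$ on the compact hull, and the fact that translations act isometrically on $\mathbb{T}$; Step 2 concludes from the singleton fiber over a nonsingular parameter. Your extra remark about defining $\beta$ through the conjugacy with the CAP hull only makes explicit what the paper assumes when it takes $\beta$ to be defined on $\OHat$ directly.
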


\begin{proof}
\emph{Step 1: the parameters agree.} Let $d_{\mathbb{T}}$ be a
translation-invariant metric on $\mathbb{T}$. For $t \in \RR^{2}$
with $|t| \to \infty$, the translates $T_{1} - t$ and $T_{2} - t$
agree on the ball of radius $|t| - r_{K}$ around the origin, so
$d_{\Omega}(T_{1} - t,\, T_{2} - t) \to 0$ in the local tiling
metric. By uniform continuity of $\beta$,
$d_{\mathbb{T}}\bigl( \beta(T_{1} - t), \beta(T_{2} - t) \bigr) \to
0$. But $\beta$ intertwines the translation action with an isometric
action on $\mathbb{T}$, so this distance is \emph{independent of}
$t$:
\[
d_{\mathbb{T}}\bigl( \beta(T_{1}), \beta(T_{2}) \bigr)
 = d_{\mathbb{T}}\bigl( \beta(T_{1} - t), \beta(T_{2} - t) \bigr)
 \;\longrightarrow\; 0 ,
\]
whence $\beta(T_{1}) = \beta(T_{2})$.

\emph{Step 2: singleton fiber.} The common parameter
$\gamma = \beta(T_{1}) = \beta(T_{2})$ is nonsingular by hypothesis,
so the fiber $\beta^{-1}(\gamma)$ is a singleton and
$T_{1} = T_{2}$.
\end{proof}

Nonsingularity is invariant under the full isometry group (the fiber
structure is equivariant), so ``nonsingular Hat tilings'' is an
invariant family carrying full measure, on which both Li--Boyle
hypotheses now hold unconditionally. This is precisely the scope of
Li and Boyle's own Fibonacci code, which is likewise defined on
nonsingular quasilattices \cite[App.~B.3]{LiBoyle2024}. The theorem
is stated on $\OHat$, the hull of the right-handed class; the
mirrored statement covers $\THat^{-}$, and no cross-class case
arises: by Lemma~\ref{lem:separation}, tilings from different
chirality classes never agree outside a bounded region in the first
place.

Theorem~\ref{thm:recover} carries the construction from conditional
to unconditional: on nonsingular tilings both Li--Boyle hypotheses
now hold with no unproven input --- exactly the epistemic position of
the original Fibonacci code. Everything downstream, including the
sector structure of \S\ref{sec:hat-code}, stands on it; what
remains open is confined to the measure-zero singular set, to which
we now turn.

\subsubsection{The singular-pair question}

What remains open is exactly one sharply posed question.

\begin{lemma}[Bounded defect exclusion --- open]
\label{lem:defect}
There is no pair of distinct Hat tilings whose symmetric difference
is nonempty and bounded.
\end{lemma}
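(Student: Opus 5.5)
Suppose, for contradiction, that $T_{1} \neq T_{2}$ are Hat tilings whose symmetric difference is nonempty and bounded, contained in a ball $K$. The plan is to climb the unique hierarchy of both tilings and push the defect into a single bounded region at a scale where it can be excluded by a finite check.

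First, by Lemma~\ref{lem:separation} both tilings lie in the same chirality class, and by reflecting we may assume $\THat^{+}$. By Step~1 of the proof of Theorem~\ref{thm:recover}, $\beta(T_{1}) = \beta(T_{2}) =: \gamma$, and this argument does not need nonsingularity. By Theorem~\ref{thm:recover} itself, $\gamma$ must then be singular. So any counterexample sits in a fiber over the window boundary. Such fibers are finite.

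Second, apply Proposition~\ref{prop:reduction}. For all large $n$, the compositions $T_{1}^{(n)} \neq T_{2}^{(n)}$ are metatile tilings that agree outside a ball of radius $R^{*}+1$ in level-$n$ units, where $R^{*}$ does not depend on $K$. Both are legal tilings, and the region $U_{n}$ where they disagree is a union of metatiles in each of them. Hence $U_{n}$ is a bounded region that is tiled in two different ways, with the outside fixed, and it contains only a uniformly bounded number of level-$n$ supertiles. Substituting back down to level $0$ turns this into a union of hats that admits two distinct hat tilings. This is exactly the retiling question posed in the abstract.

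Third, exclude these bounded double tilings. The candidate regions come in finitely many combinatorial types: patches of at most $N(R^{*})$ metatiles, together with their first corona, which fixes the boundary condition. Because the composition is local with radius $R_{0}$ from the first corona, a second tiling of $U_{n}$ would itself compose into metatiles consistently with the shared exterior. I would therefore run an exhaustive enumeration over all legal metatile patches of radius $R^{*}+1+R_{0}$. For each patch, I would check whether the region covered by its interior admits a second metatile tiling compatible with its boundary corona. This is a finite SAT/exact-cover computation, in the spirit of Proposition~\ref{prop:computational}.

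The main obstacle is the third step. The reduction bounds the size of the defect only in level-$n$ units, so the exact cover has to be solved for metatile patches rather than hat patches. Making that honest requires showing that a second hat tiling of $U_{n}$ composes into metatiles consistent with the exterior, since composition near the boundary of $U_{n}$ sees both tilings. My first attempt would be to enlarge $U_{n}$ by $R_{0}$ so that the exterior corona fixes the composition. The fixed-point radius then becomes $R^{*}+R_{0}/\gphi^{2n}$, which is still bounded. If this works, the lemma reduces to a certified finite computation. Failing that, I would look for a fractal-boundary argument that singular fibers of the CAP model set are separated along infinite seams, as with Penrose, rather than at isolated points, as with Fibonacci.
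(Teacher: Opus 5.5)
\paragraph{This is a plan, not a proof.} The paper does not prove this statement. It is labelled open. What the paper does establish around it is exactly your first two steps: a counterexample must be a singular pair over a common parameter (Theorem~\ref{thm:recover}); its disagreement has radius at most $R^{*}+1$ in level-$n$ units (Proposition~\ref{prop:reduction}); and it is equivalent to a tile-union region with a second tiling by hats. The paper adds only the finite evidence of Proposition~\ref{prop:computational}. Your proposal stops at the same point.

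Step~2 gains nothing at level $0$. $T_{1},T_{2}$ already exhibit a bounded union of hats tiled in two ways. Substituting $U_{n}$ back down returns a region of diameter of order $(R^{*}+1)\gphi^{2n}$ hat units, so no size bound in hat units results. The only new information is the uniform bound on the number of level-$n$ supertiles involved. All the weight therefore falls on Step~3, which you neither fully specify nor carry out. The Fibonacci singular pair shows that a bounded defect is a possible outcome, so the result of that computation cannot be presumed.

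\paragraph{Step~3 checks the wrong objects.} The tiles of $T_{i}^{(n)}$ are level-$n$ supertiles. These are only combinatorially equivalent to $H,T,P,F$, not similar to them. Their admissible arrangements are defined by the requirement that they desubstitute to a tiling by hats. An exact cover by bare metatile polygons carries no such information, so it can neither certify nor exclude a second legal supertile arrangement. Even marked metatiles are tied to level-$n$ supertiles only through a correspondence you have not established. Enlarging $U_{n}$ by $R_{0}$ does not address this.

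\paragraph{What is missing.} You need a transfer lemma: legal level-$n$ supertile tilings correspond to level-$1$ compositions of Hat tilings, with the property \emph{agree outside a bounded set} preserved. Two possible routes are the combinatorial equivalence of $H',T',P',F'$ with $H,T,P,F$ in \cite{Smith2024hat}, or the exactly self-similar CAP model of \cite{BGS2025}, on whose hull inflation is invertible. Such a lemma would show that a bounded defect exists if and only if one exists with difference region of diameter at most an explicit constant $C$ in hat units, determined by $R_{0}$.

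Only then does Lemma~\ref{lem:defect} become a finite problem. The retiling check must then cover all legal tile-unions of diameter at most $C$, for instance certified through repetitivity. Checking only those occurring in one sample patch, as in Proposition~\ref{prop:computational}, is not enough. The lemma is proved only once that exhaustive check has actually been run and found no second tiling.
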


By Theorem~\ref{thm:recover} any such pair would consist of two
singular tilings over the same parameter; by
Proposition~\ref{prop:reduction} its difference region could be taken
of radius $\leq R^{*} + 1$ at every composition level. Fibonacci
shows such pairs can exist in principle; Penrose shows they can be
absent. Because the hat needs no matching rules --- aperiodicity is
enforced by shape alone --- the question has a purely geometric
reformulation: \emph{does any region of the plane that is a union of
tiles in some Hat tiling admit two distinct tilings by hats?} Any
second tiling of such a region, substituted into the ambient tiling,
would immediately produce a bounded singular pair, and conversely.

\subsubsection{Computational verification at small scale}

We verified computationally that no such region exists up to a
substantial scale, using two data sets of A.\ Ch\'eritat (vertex
coordinates of Hat tilings generated algorithmically). Both lie
exactly (to $2 \times 10^{-8}$) on the kisrhombille lattice
$\{(p/4,\, q\sqrt{3}/4) : p + q \text{ even}\}$.

The first set is a bare point cloud of $537$ tile corners with no
tile assignments --- a blind decoding problem. Taking the hat outline
from the discoverers' own software \cite{KaplanHatviz} and requiring
every corner of every placed tile to appear in the observed cloud, an
exact-cover computation decodes it into a \emph{unique} decomposition
of $66$ complete hats plus rim fragments, validating the entire
pipeline against data that carries no labelling.

The second set lists $34\,860$ vertices in $2\,490$ consecutive
$14$-vertex blocks; every block is verified congruent to the hat
$14$-gon, so this set \emph{is} a tiling patch of $2\,490$ hats
($19\,920$ kites, pairwise disjointness checked), containing the
first as a subset. Its reflected-tile fraction is
$313/2490 = 0.1257$, matching the asymptotic
$(3-\sqrt{5})/6 = 0.1273$ of Eq.~\eqref{eq:fr} within the
$\sim\!0.7\%$ statistical error of a sample of this size.

\begin{proposition}[Computational]
\label{prop:computational}
The region covered by the $2\,490$-hat patch admits exactly one
tiling by hats (exhaustive SAT enumeration over all $38\,342$
placements fitting inside the region). Consequently every one of the
$2^{2490}$ sub-regions formed as a union of its tiles admits exactly
one tiling by hats.
\end{proposition}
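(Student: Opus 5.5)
The proposition has two parts: a certified computation, and a purely logical consequence of it. I would treat them separately.

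For the computational half, the region $\mathcal{R}$ is the union of the $19\,920$ kites of the verified $2\,490$-hat patch. Because every vertex lies exactly on the kisrhombille lattice, I would first establish that any hat contained in $\mathcal{R}$ is a union of eight lattice kites. Strictly this is an assumption restricting the search to lattice-aligned placements. I would justify it, or at least flag it, by the standard fact that hat tilings lie on the kisrhombille lattice \cite{Smith2024hat}; edge matching at the boundary of $\mathcal{R}$ forces any candidate tile onto lattice positions.

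Next I would enumerate every placement of the hat or anti-hat whose eight kites all lie in $\mathcal{R}$, over the twelve rotations, both reflections and all lattice translations. This produces the $38\,342$ candidates. I would then encode an exact-cover instance with one Boolean variable per placement and one exactly-one constraint per kite, consisting of an at-least-one clause plus pairwise at-most-one clauses or a cardinality encoding. The SAT solver returns the known tiling $S_{0}$. I then add the blocking clause $\neg\bigwedge_{p\in S_{0}} x_{p}$ and re-solve. An UNSAT answer, accompanied by a DRAT proof checked independently, certifies uniqueness. The main obstacle is making this certificate trustworthy: the enumeration must be provably complete, using exact integer arithmetic in the $(p,q)$ coordinates rather than floating point, and the UNSAT proof must be independently checked. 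I would also cross-validate by running the same pipeline on small regions with known multiple tilings, for instance a parallelogram-like region admitting a hat flip if one exists, or an artificially ambiguous union of kites, so that the solver is shown to detect non-uniqueness.

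For the consequence, let $\mathcal{R}' \subseteq \mathcal{R}$ be the union of a subset $S' \subseteq S_{0}$ of tiles, and suppose $\mathcal{R}'$ had a second tiling $S''\neq S'$. Then $S'' \cup (S_{0}\setminus S')$ tiles $\mathcal{R}$, since the tiles of $S_{0}\setminus S'$ cover exactly $\mathcal{R}\setminus\mathcal{R}'$ with disjoint interiors. This tiling differs from $S_{0}$, contradicting uniqueness. Counting the subsets gives the $2^{2490}$ sub-regions, and this step is routine.
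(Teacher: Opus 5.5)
Your proposal is correct and follows essentially the paper's route: an exhaustive exact-cover SAT enumeration, in exact integer lattice arithmetic, over the placements fitting inside the region. It then uses the same substitution argument for the consequence: a second tiling of any sub-union, completed by the remaining original tiles, would give a second tiling of the full region. Your explicit flagging of the implicit restriction to lattice-aligned placements, which the paper's $38\,342$-placement search also makes, and your DRAT-checked blocking-clause certificate are refinements of the paper's argument rather than departures from it.
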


The consequence follows because an alternative tiling of any
sub-union could be substituted into the unique tiling of the full
region, producing a second one. The certified sub-regions include all
$39$ congruence classes of first-corona regions (a tile with its edge
neighbours) and all $35$ classes of radius-$2$ ball regions occurring
among the $2\,047$ interior tiles of the patch, as well as every
larger tile-union fitting within its extent (roughly $56$ tile
diameters). Hence any bounded singular pair for the Hat must have a
difference region not congruent to \emph{any} tile-union occurring in
this patch. Independent per-region SAT checks on the smaller data set
(all first coronas and radius-$2$ balls) agree. The computations use
only exact integer lattice arithmetic and exhaustive solution
enumeration; the certificates and the code that re-verifies them are
provided as ancillary files with this preprint, under an MIT licence.
Figure~\ref{fig:flower} illustrates one instance.

The same enumeration doubles as a finite adjacency check for the
identification $\THat^{+} = \OHat$ used in
Theorem~\ref{thm:recover}. Both spaces are the orbit closures of Hat
tilings under $\RR^{2} \rtimes C_{6}$; they coincide once every legal
local configuration extends to a global tiling in exactly one way,
which is precisely what the certificate verifies on the certified
patch: each of the $66$ radius-$2$ configurations and each of the
$30$ flower configurations admits a unique completion (last column of
the per-class records). This is the two-corona forcing that
underlies the Anderson--Putnam presentation of \cite{BGS2025}; here
it is confirmed directly, without appeal to that machinery.

\begin{figure}[t]
\centering
\includegraphics[width=\linewidth]{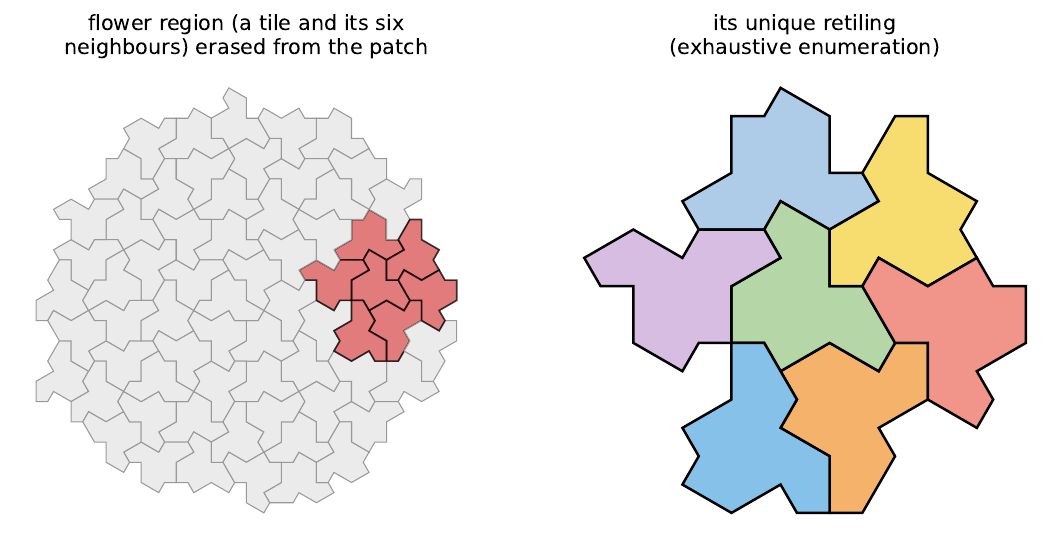}
\caption{One of the region-retiling tests, shown on the smaller
($66$-hat) data set. Left: a flower region (a
tile and its six neighbours) is erased from the decoded patch.
Right: exhaustive enumeration of all ways to tile the emptied region
by hats finds exactly one --- the original. Because the hat needs no
matching rules, any second tiling of any such region would
immediately produce two legal plane tilings agreeing outside a
bounded set; none exists in this sample.}
\label{fig:flower}
\end{figure}

\subsection{The Hat code: two sectors and a superselected chirality
bit}
\label{sec:hat-code}

Because proper isometries preserve tile chirality, the class
decomposition of \S\ref{sec:hat-pf} is respected by the gauge group:
$g\, \THat^{\pm} = \THat^{\pm}$ for every $g \in \SEtwo$, while
reflections exchange the two classes. Let
$\THat^{\pm,\mathrm{ns}}$ denote the nonsingular tilings of each
class ($\SEtwo$-invariant, of full measure). Define, for
$T \in \THat^{\pm,\mathrm{ns}}$,
\begin{align*}
|\Psi_{[T]}\rangle &\;=\; \int_{\SEtwo}\! dg\, |gT\rangle ,\\
\CHat^{\pm} &\;=\; \overline{\operatorname{span}}
\bigl\{ |\Psi_{[T]}\rangle : T \in \THat^{\pm,\mathrm{ns}}
\bigr\},
\end{align*}
and
\begin{equation}
\CHat \;=\; \CHat^{+} \oplus \CHat^{-} .
\label{eq:hatsum}
\end{equation}
The average runs over proper isometries only. This is the physically
natural choice --- position and orientation are unobservable in the
absence of an external frame, but parity is not a gauge symmetry ---
and it is the choice under which the two sectors are well defined.

\begin{theorem}[The Hat hybrid memory]
\label{thm:hatcode}
(i) Each sector $\CHat^{\pm}$ corrects erasure of any bounded
region $K \subset \RR^{2}$ (over the nonsingular tilings
unconditionally; over all of $\THat^{\pm}$ if
Lemma~\ref{lem:defect} holds).
(ii) For every operator $O_{K}$ supported in $K$ and codewords
$|\Psi_{i}\rangle, |\Psi_{j}\rangle \in \CHat$,
\[
\langle \Psi_{i} | O_{K} | \Psi_{j} \rangle
\;=\; c_{s(i)}(O_{K})\, \delta_{ij} ,
\]
where $s(i) \in \{+,-\}$ is the sector label, and
$c_{+} \neq c_{-}$ for chirality-sensitive observables. The sector
label is therefore a \emph{superselected classical bit}: erasure of
any bounded region preserves it, any bounded window can read it, and
coherences between the two sectors are not protected.
\end{theorem}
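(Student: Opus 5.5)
The plan is to verify the Knill--Laflamme conditions sector by sector, by feeding the two Li--Boyle hypotheses into the argument of Section~\ref{sec:framework}, and then to handle the cross-sector matrix elements separately.

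For (i), fix a sign, say $+$. Hypothesis (H1) within $\THat^{+}$ is Theorem~\ref{thm:LI}. It still holds after restricting to the nonsingular tilings $\THat^{+,\mathrm{ns}}$, because patch frequencies are defined tiling by tiling. It also holds under averaging over $\SEtwo$ rather than $\Etwo$: proper isometries preserve the class, so every $gT$ with $T\in\THat^{+}$ has the frequency system $\phi^{+}$. The reduced density matrix of $|\Psi_{[T]}\rangle$ on $K$ is then the $\phi^{+}$-weighted mixture over patches seen in $K$, averaged over orientations. Hence the diagonal value $\langle\Psi_{[T]}|O_K|\Psi_{[T]}\rangle = c_{+}(O_K)$ is independent of $[T]$, after the normalization of the improper states that Li and Boyle use. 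Hypothesis (H2) on $\THat^{+,\mathrm{ns}}$ is Theorem~\ref{thm:recover}, after identifying $\THat^{+}$ with $\OHat$ as in the Remark following Lemma~\ref{lem:separation}. For $[T_1]\neq[T_2]$ and any $g,h\in\SEtwo$, the tilings $gT_1$ and $hT_2$ are distinct, and at least one is nonsingular, since nonsingularity is isometry-invariant. They therefore cannot agree outside $K$, so $\langle gT_1|O_K|hT_2\rangle=0$. Integrating over $g$ and $h$ kills the off-diagonal terms. The conditional clause, covering all of $\THat^{\pm}$, uses the same argument with Lemma~\ref{lem:defect} replacing the nonsingularity hypothesis. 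Extending from the spanning vectors to the closed span is by linearity and continuity.

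For (ii), consider cross-sector elements with $T_1\in\THat^{+}$ and $T_2\in\THat^{-}$. Since $\SEtwo$ preserves the classes, $gT_1\in\THat^{+}$ and $hT_2\in\THat^{-}$. Lemma~\ref{lem:separation} says such tilings never agree outside a bounded region, so every integrand vanishes and the cross terms are zero. Combining this with (i) gives the block-diagonal form $c_{s(i)}(O_K)\delta_{ij}$.

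To show $c_{+}\neq c_{-}$, I would exhibit an explicit observable. Let $O_K$ be the projector-valued count of reflected hats whose centres lie in $K$. By Corollary~\ref{cor:LI} its expectation is $\rho\,|K|(3\mp\sqrt5)/6$, where $\rho$ is the tile density. The two values differ by $\rho|K|\sqrt5/3$, which gives the separation $\Delta$ after normalizing by the density. The classical-bit statements then follow. Erasure preserves the label because each sector corrects erasure separately and no local operator maps between sectors. Readability holds because $O_K$ has different expectations in the two sectors. Coherences are unprotected because $c_{+}\neq c_{-}$ means local operators see the relative weights of the two sectors, so superpositions across sectors violate Knill--Laflamme.

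The main obstacle is the step establishing $c_{+}\neq c_{-}$ rigorously. Equal-weight group averaging produces improper, non-normalizable states. The ``reduced density matrix equals the frequency-weighted mixture'' claim therefore has to be set up through Li and Boyle's regularization, namely averaging over a large ball of translations and taking a limit. One must check that this regularization commutes with the restriction to $\SEtwo$ and to the nonsingular subset. I would first import their finite-volume normalization verbatim and verify that the frequency limit of Theorem~\ref{thm:LI}, which is uniform in $x$, suffices for the limit to exist and to equal $\phi^{\pm}$-expectations.
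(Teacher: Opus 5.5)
Your proposal is correct and follows essentially the same route as the paper's proof sketch. The sector-internal Knill--Laflamme conditions come from (H1) via Theorem~\ref{thm:LI} and (H2) via Theorem~\ref{thm:recover}; cross-sector elements vanish because $\SEtwo$ preserves the chirality classes and Lemma~\ref{lem:separation} forbids agreement outside $K$; and $c_{+}\neq c_{-}$ is witnessed by the reflected-hat counter through Corollary~\ref{cor:LI}. You also make explicit two points the paper's sketch leaves implicit, namely the identification $\THat^{+}=\OHat$ and the regularization of the non-normalizable orbit states, but neither changes the argument.
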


\begin{proof}[Proof sketch]
(i) Within a class, (H1) holds by Theorem~\ref{thm:LI} and (H2) by
Theorem~\ref{thm:recover}, so the Knill--Laflamme conditions
\eqref{eq:KL} hold sector-internally. (ii) For $i, j$ in opposite
sectors, every pair $gT_{i}, hT_{j}$ with $g, h \in \SEtwo$
consists of tilings from opposite classes (proper isometries preserve
the class), which by Lemma~\ref{lem:separation} never agree outside
$K$; the off-diagonal matrix element therefore vanishes exactly as in
the recoverability argument of Section~\ref{sec:framework}. The
diagonal constant depends only on the class by
Theorem~\ref{thm:LI}, and the observable below realizes
$c_{+} \neq c_{-}$.
\end{proof}

The distinguishing observable is the reflected-tile counter: for
bounded $K$,
\[
N^{\mathrm{refl}}_{K} \;=\;
\sum_{\text{tiles } t \subset K}
\mathbf{1}\bigl[\, t \text{ is an anti-hat} \,\bigr] .
\]
Tile chirality is invariant under every proper isometry, so
$N^{\mathrm{refl}}_{K}$ is well defined on all orbit-averaged
codewords, and by Corollary~\ref{cor:LI}
\begin{equation}
\begin{aligned}
\Delta_{\mathrm{Hat}}
&\;=\; \bigl\langle N^{\mathrm{refl}}_{K} \bigr\rangle_{-}
- \bigl\langle N^{\mathrm{refl}}_{K} \bigr\rangle_{+} \\
&\;=\; |K|_{\mathrm{tiles}}
\left( \tfrac{3+\sqrt{5}}{6} - \tfrac{3-\sqrt{5}}{6} \right)
\;=\; |K|_{\mathrm{tiles}}\, \frac{\sqrt{5}}{3} ,
\end{aligned}
\label{eq:DeltaHat}
\end{equation}
about $0.745\, |K|_{\mathrm{tiles}}$: a window containing a handful
of tiles reads the bit reliably. Figure~\ref{fig:chirality} shows
the handedness census of the certified patch.

\begin{figure}[t]
\centering
\includegraphics[width=\linewidth]{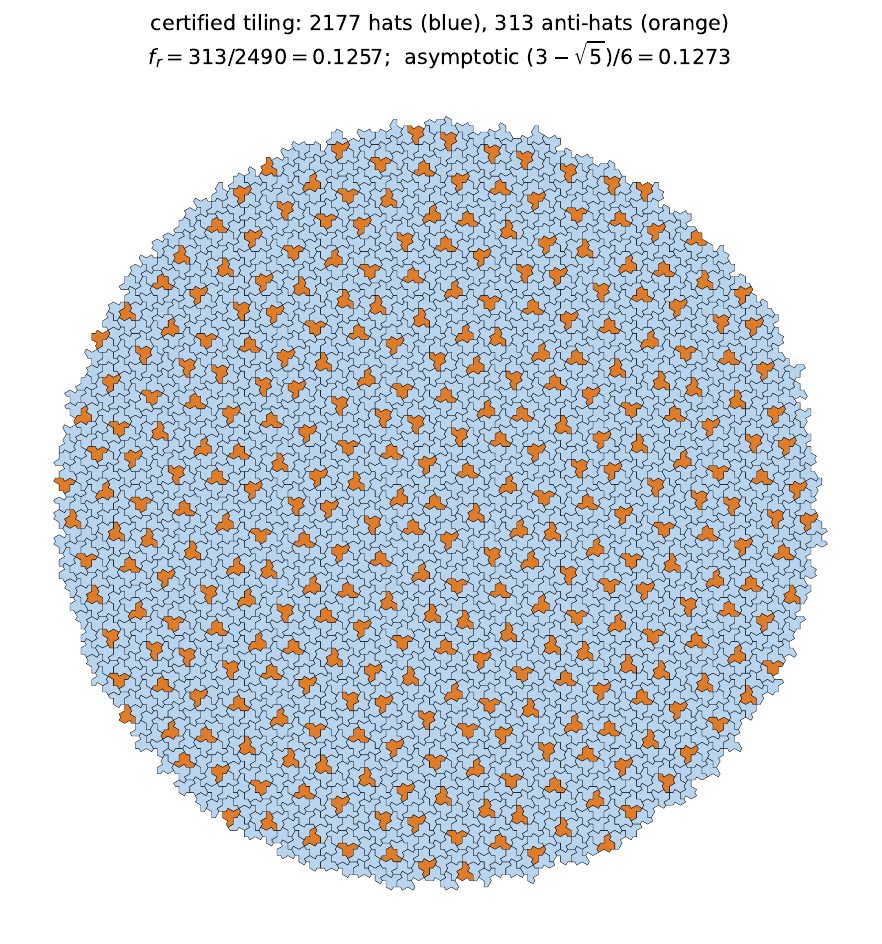}
\caption{Handedness census of the certified
$2490$-hat patch of \S\ref{sec:hat-rec}:
$2177$ hats (blue), $313$ anti-hats (orange); reflected fraction
$0.1257$ against the asymptotic $(3-\sqrt{5})/6 = 0.1273$. This
tiling belongs to $\THat^{+}$; in its mirror image the colours ---
and the value of the superselected bit --- are exchanged. Any bounded
window estimates the reflected fraction and thereby reads the bit,
with the macroscopic separation of Eq.~\eqref{eq:DeltaHat}.}
\label{fig:chirality}
\end{figure}

This structure is the precise opposite of topological degeneracy. In
a toric code the ground sectors are locally \emph{indistinguishable},
which is why relative phases between them are protected; here the
sectors are locally \emph{distinguishable}, which is why the label is
robust and classical. $\CHat$ is a hybrid quantum--classical memory:
one superselected bit --- the global handedness of the tiling's
long-range order --- alongside two erasure-correcting quantum
sectors. In channel terms: erasure of $K$ acts as a correctable
quantum channel on each sector and as a noiseless classical channel
on the label; the code is a direct sum of quantum codes indexed by a
decodable classical register. The same structure can be phrased in
the entanglement language of \cite{LiBoyle2024}: within a sector,
every codeword has the same reduced state, hence the same
entanglement entropy $S(\rhoK)$ for every region. Across sectors,
reflection is implemented by a unitary on the tiling Hilbert space (a
permutation of the basis of tilings), so for any reflection-symmetric
region $K$ --- a disk, say ---
$S(\rhoK^{+}) = S(\rhoK^{-})$: the chirality bit is completely
invisible to entanglement entropy, while being macroscopically
visible to the counter $N^{\mathrm{refl}}_{K}$. If reflections are gauged as well ($G = \Etwo$), the two
classes fall into a single orbit and the bit disappears;
\S\ref{sec:spec-qubit} shows that for the Spectre the same collapse
is caused already by a proper rotation.

The correctable-radius bookkeeping follows the deflation ladder: an
erasure of radius $r$ (in fundamental hat units $a$) closes after
$\lceil \log_{\gphi^{2}} (r/a) \rceil$ deflation steps; equivalently,
the correctable radius after $n$ steps scales as
\begin{equation}
r_{n} \;=\; a\, \gphi^{2n} .
\label{eq:rhat}
\end{equation}
The Hat tilings have $C_{6}$ statistical rotational symmetry,
enhanced to $D_{6}$ when reflections are counted, and first \v{C}ech
cohomology $\check{H}^{1}(\OHat, \ZZ) = \ZZ^{4}$ \cite{BGS2025};
these data enter the discussion of Section~\ref{sec:discussion}.

\section{The Spectre code}
\label{sec:spectre}

The Spectre is a strictly chiral aperiodic monotile discovered by
Smith, Myers, Kaplan and Goodman-Strauss shortly after the Hat
\cite{Smith2024spectre}. Unlike the Hat, every tiling by Spectres
uses tiles of a single handedness: reflections are unnecessary, and
the prototile set consists of a single shape in twelve rotational
orientations up to translation. Baake, G\"ahler, Maz\'a\v{c} and
Sadun \cite{BGMS2025} established that the twelve orientations split
into two subclasses of six related by a $30^{\circ}$ rotation, and
that the resulting long-range order comprises \emph{two}
local-indistinguishability classes rather than one. This parallels
the Hat's two chirality classes of \S\ref{sec:hat-pf}; the decisive
difference, developed in \S\ref{sec:spec-qubit}, is that the
isometry exchanging the Spectre's classes is a \emph{proper}
rotation, so the superselected label that survives $\SEtwo$-gauging
for the Hat does not survive it for the Spectre.

\subsection{Structural preliminaries}
\label{sec:spec-struct}

The Spectre is the equilateral member $\mathrm{Tile}(1,1)$ of the
Hat--Turtle continuum, with its 14 edges deformed so that reflections
cannot fit together in a tiling \cite{Smith2024spectre}. All tiles
occur at orientations that are integer multiples of $30^{\circ}$;
following \cite{Smith2024spectre}, we distinguish \emph{even}
Spectres (rotations by even multiples of $30^{\circ}$) from
\emph{odd} Spectres (odd multiples). Because Spectre tilings contain
no reflected tiles, the relevant symmetry group is the
orientation-preserving group $\SEtwo$ rather than $\Etwo$.

The structural input parallels the Hat exactly:
Theorem~2.2 of \cite{Smith2024spectre} states that the Spectre
``admits a tiling, and in any tiling it admits, each tile is
contained within an infinite, unique hierarchy of larger and larger
supertiles.'' This is the Spectre unique-hierarchy theorem, and it
plays below the role that \cite[\S 5]{Smith2024hat} plays in
Section~\ref{sec:hat}.

\paragraph{Substitution structure.}
Every Spectre tiling can be composed into non-overlapping copies of
two clusters \cite[Fig.~2.1]{Smith2024spectre}: a \emph{Spectre
cluster} of one Mystic and seven Spectres, and a \emph{Mystic
cluster} of one Mystic and six Spectres, where a Mystic is a
two-Spectre compound (one even, one odd). With $M_{n}$ Mystics and
$S_{n}$ single Spectres at level $n$, the rules
$S \mapsto M + 7S$, $M \mapsto M + 6S$ give
\begin{equation}
\begin{pmatrix} M_{n+1} \\ S_{n+1} \end{pmatrix}
= \begin{pmatrix} 1 & 1 \\ 6 & 7 \end{pmatrix}
\begin{pmatrix} M_{n} \\ S_{n} \end{pmatrix} .
\label{eq:Mspec}
\end{equation}
The matrix has trace 8 and determinant 1, so its eigenvalues are
$\lambda_{\pm} = 4 \pm \sqrt{15}$, with Perron--Frobenius eigenvalue
\begin{equation}
\lambda_{+} \;=\; 4 + \sqrt{15} \;\approx\; 7.873
\label{eq:lamspec}
\end{equation}
(area inflation per substitution step), matching the eigenvalue
computation of \cite{Smith2024spectre} and reappearing as the
squared-substitution cohomology eigenvalue in
\cite[Thm.~1]{BGMS2025}. The Perron--Frobenius eigenvector,
normalized to $v_{M} = 1$, is
\[
(v_{M},\, v_{S}) \;=\; \bigl( 1,\ 3 + \sqrt{15} \bigr),
\]
verified directly: $1 + (3+\sqrt{15}) = 4+\sqrt{15}$ and
$6 + 7(3+\sqrt{15}) = (4+\sqrt{15})(3+\sqrt{15})$.

\paragraph{Handedness and $\sigma^{2}$.}
The substitution reverses tile handedness at every step
\cite{Smith2024spectre}: applying $\sigma$ once to an ``even'' tiling
produces a cluster of ``odd'' tiles. The chirality-preserving
substitution is therefore $\sigma^{2}$, with matrix
$\bigl(\begin{smallmatrix} 7 & 8 \\ 48 & 55 \end{smallmatrix}\bigr)$,
Perron--Frobenius eigenvalue $(4+\sqrt{15})^{2} = 31 + 8\sqrt{15}$,
and \emph{linear} inflation $4+\sqrt{15}$ per $\sigma^{2}$ step. The
single-step linear inflation is $\sqrt{4+\sqrt{15}} \approx 2.806$.
Since the code superposes over orientation-preserving isometries
only, $\sigma^{2}$ is the natural level for correctable-radius
bookkeeping.

\paragraph{Even/odd Spectre frequencies.}
Each Mystic contains one even and one odd Spectre; each single
Spectre is even in the reference orientation of the substitution
rule. Counting individual Spectres,
$\#(\text{odd}) = v_{M}$, $\#(\text{even}) = v_{M} + v_{S}$,
$\#(\text{total}) = 2 v_{M} + v_{S}$, so the asymptotic
orientation-class frequencies are
\begin{equation}
\begin{aligned}
f_{\mathrm{odd}} &\;=\; \frac{5-\sqrt{15}}{10} \;\approx\; 0.1127,\\
f_{\mathrm{even}} &\;=\; \frac{5+\sqrt{15}}{10} \;\approx\; 0.8873,
\end{aligned}
\label{eq:fspec}
\end{equation}
with $f_{\mathrm{even}}/f_{\mathrm{odd}} = 4+\sqrt{15}$, in agreement
with \cite{Smith2024spectre}. This is the Spectre analogue of the
reflected-Hat frequency of Eq.~\eqref{eq:fr}, with the crucial
distinction that here the minority class is a \emph{rotational}
subclass rather than a reflected one. Figure~\ref{fig:spectrepatch}
verifies both statements in a generated tiling.

\begin{figure}[t]
\centering
\includegraphics[width=\linewidth]{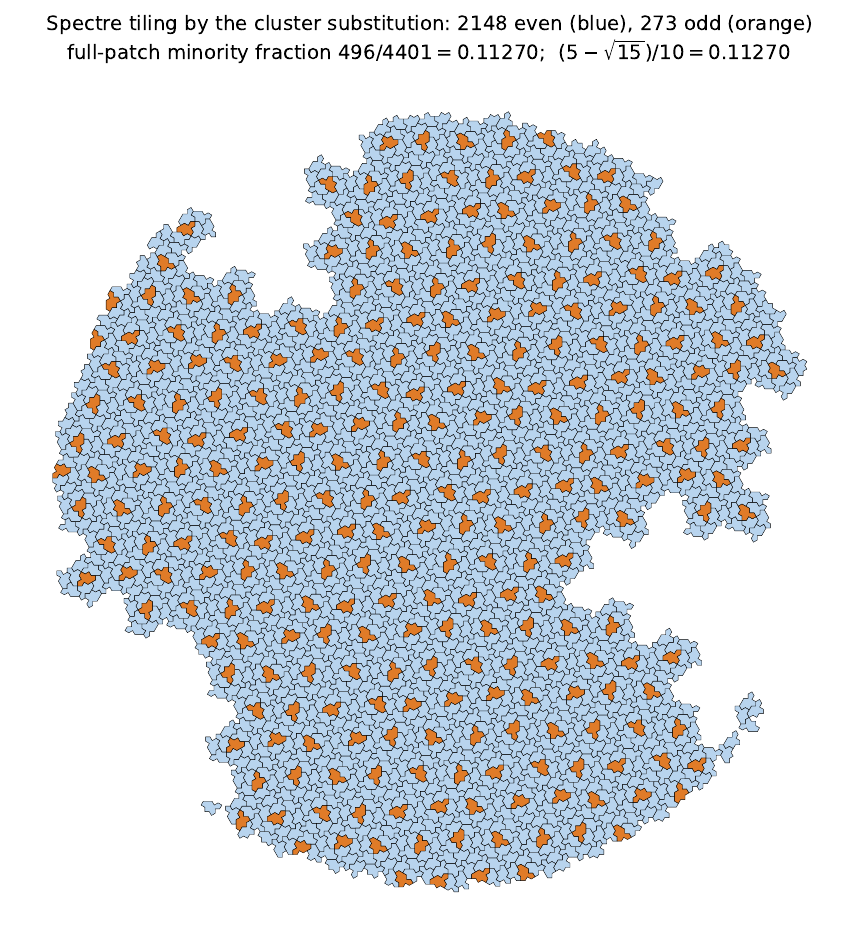}
\caption{A Spectre tiling generated by the $\Gamma$--$\Psi$ cluster
substitution of \cite{Smith2024spectre} (four inflation steps,
$4401$ tiles; central disk shown), coloured by orientation parity as
in Figure~\ref{fig:spectre}. Every placement in the patch is proper
(determinant $+1$, checked tile by tile), consistent with strict
chirality. The minority-orientation fraction of the full patch is
$496/4401 = 0.11270$, matching $(5-\sqrt{15})/10 = 0.11270$ of
Eq.~\eqref{eq:fspec} to five decimal places, and each of the six
even (odd) orientations appears with nearly equal frequency,
reflecting the $C_{6}$ statistical symmetry of each LI class.}
\label{fig:spectrepatch}
\end{figure}

\paragraph{Statistical symmetry and diffraction.}
Every finite patch occurs in six orientations related by rotations of
$2\pi/6$ with equal frequency, so Spectre tilings have $C_{6}$
statistical rotational symmetry but \emph{not} $C_{12}$: the even and
odd subclasses are not statistically interchangeable
\cite{BGMS2025}. The Spectre hull has pure-point dynamical spectrum
with a $4{:}2$ cut-and-project description and regular windows of
Rauzy-fractal type, and $\check{H}^{1}(\Omega, \CC) = \CC^{4}$
\cite[Thm.~1]{BGMS2025}. Pure-point diffraction is the quantitative
form of local indistinguishability required by the Li--Boyle
framework, and holds separately within each of the two LI classes.

\subsection{The two LI classes}
\label{sec:spec-li}

The central structural difference from the Hat is the following
result of \cite{BGMS2025}: the twelve Spectre orientations partition
into two subclasses of six (even and odd) related by a $30^{\circ}$
rotation, and there are correspondingly two distinct
local-indistinguishability classes, henceforth $\LI_{1}$ and
$\LI_{2}$. In $\LI_{1}$ the six even orientations form the
statistical majority, with total frequency $(5+\sqrt{15})/10$, and
the odd orientations the minority, with total frequency
$(5-\sqrt{15})/10$; $\LI_{2}$ is the image of $\LI_{1}$ under a
$30^{\circ}$ rotation, with the roles exchanged.
Figure~\ref{fig:spectre} shows the parity structure of the twelve
orientations.

\begin{figure}[t]
\centering
\includegraphics[width=0.9\linewidth]{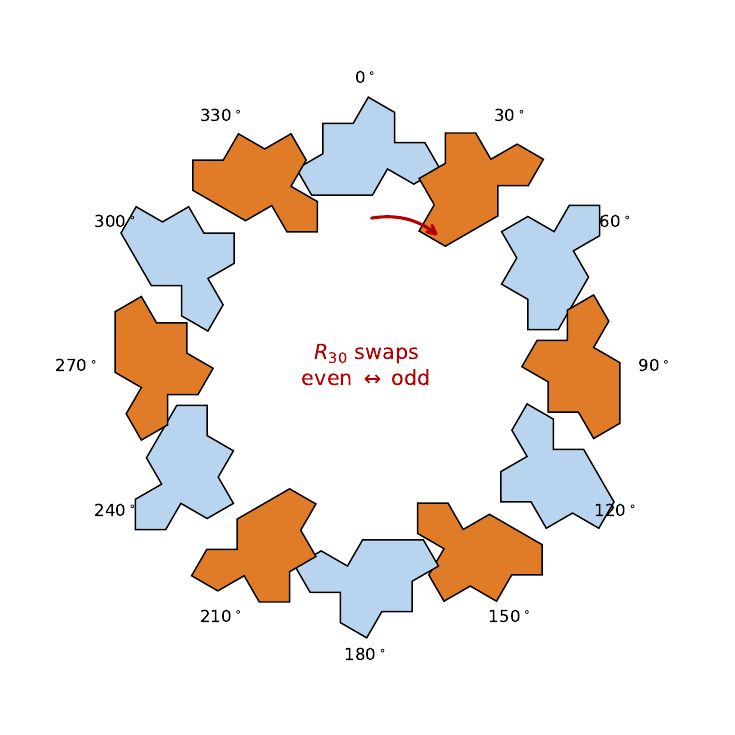}
\caption{The twelve Spectre orientations, coloured by orientation
parity: even multiples of $30^{\circ}$ blue, odd multiples orange
(the prototile itself is introduced in Figure~\ref{fig:anatomy}).
Within $\LI_{1}$ the blue orientations carry total frequency
$(5+\sqrt{15})/10$ and the orange ones $(5-\sqrt{15})/10$; the
rotation $R_{30}$ maps every blue to an orange, so the class label is
frame-dependent --- which is why it does not survive
$\SEtwo$-gauging (Theorem~\ref{thm:merger}).}
\label{fig:spectre}
\end{figure}

Rotating a tiling $T \in \LI_{1}$ by $30^{\circ}$ produces
$T' \in \LI_{2}$, but $T'$ is \emph{not} in the LI class of $T$:
every finite patch of $T'$ has a $30^{\circ}$-rotated counterpart in
$T$, but the frequency distribution over the twelve orientations
differs, so $T$ and $T'$ do not share the same patch statistics.
Both classes have pure-point dynamical spectrum
\cite{Queffelec, BGMS2025} and arise from the same cut-and-project
scheme with the same window, differing only in the projection
direction \cite{BGMS2025}.

Within each fixed class $\LI_{k}$, the argument of
Theorem~\ref{thm:LI} transfers verbatim, with the Spectre
unique-hierarchy theorem \cite[Thm.~2.2]{Smith2024spectre} in place
of \cite[Thm.~5.1]{Smith2024hat} and the census matrix of
Eq.~\eqref{eq:Mspec} in place of Eq.~\eqref{eq:M}:

\begin{theorem}[Strong local indistinguishability, Spectre, within each class]
\label{thm:LIspec}
For each $k \in \{1, 2\}$, every finite legal patch has a
well-defined frequency, identical in every tiling of $\LI_{k}$.
Consequently the reduced density matrix
$\rhoK^{(k)}$ on any bounded region is independent of the
representative tiling within $\LI_{k}$.
\end{theorem}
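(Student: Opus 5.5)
The plan is to transport the four-step argument of Theorem~\ref{thm:LI} to the Spectre. The Spectre unique-hierarchy theorem \cite[Thm.~2.2]{Smith2024spectre} replaces \cite[Thm.~5.1]{Smith2024hat}, and the chirality-preserving substitution $\sigma^{2}$ replaces the Hat's period-two substitution. Working at the level of $\sigma^{2}$ matters: $\sigma$ itself exchanges even and odd tiles, so only $\sigma^{2}$ maps a tiling of $\LI_{k}$ to supertile decompositions of the same orientation class. Fix $k$ and $T \in \LI_{k}$. For every $n$, $T$ decomposes uniquely into level-$2n$ supertiles, drawn from the finitely many supertile types produced by the $\Gamma$--$\Psi$ cluster substitution in its twelve orientations.

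\emph{Step 1 (counting inside supertiles).} For a finite patch $P$, let $c_{P}^{(2n)}(\tau)$ be the number of copies of $P$ interior to a level-$2n$ supertile of type $\tau$. This count depends only on $(n,\tau)$. The type census inside higher supertiles is governed by the matrix of $\sigma^{2}$, refined to record orientations. On the two-cluster census this is $\bigl(\begin{smallmatrix} 7 & 8 \\ 48 & 55\end{smallmatrix}\bigr)$, which is strictly positive, hence primitive. By Perron--Frobenius, exactly as in Proposition~\ref{prop:PF}(5), $c_{P}^{(2n)}(\tau)/\operatorname{area}(\tau^{(2n)})$ converges to a limit $\phi^{(k)}_{P}$ independent of $\tau$. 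The rate is governed by $\lambda_{-}/\lambda_{+} = (4-\sqrt{15})^{2}$ per $\sigma^{2}$ step.

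\emph{Step 2 (boundary copies).} Copies of $P$ not interior to any level-$2n$ supertile lie within $\operatorname{diam}P$ of the supertile boundary network. The linear inflation is $4+\sqrt{15}$ per $\sigma^{2}$ step, so in $B_{R}$ these copies number $O\bigl(R^{2}(4+\sqrt{15})^{-n}\bigr)$. The supertile boundaries are fractal-free polygonal paths here, consisting of tile edges of bounded total length per supertile, so this perimeter-to-area bound holds.

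\emph{Step 3 (frequency).} Sandwich $N_{P}(T,B_{R}(x))$ between supertile-interior counts at a level $n(R)$ with $(4+\sqrt{15})^{n(R)} = o(R)$. This gives uniform convergence of $N_{P}(T,B_{R}(x))/\pi R^{2}$ to $\phi^{(k)}_{P}$. The limit never referred to $T$, so it is common to all of $\LI_{k}$.

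The main obstacle is in Step 1: checking that the orientation-refined substitution (cluster type times twelve orientations, or at least times the six orientations of one parity class after $\sigma^{2}$) is primitive \emph{within} $\LI_{k}$. The two-type census is blind to orientation. What needs to be shown is that inside a sufficiently high-order supertile every admissible type occurs in every orientation compatible with the class. I would try first to exhibit, from the explicit cluster rule, a level-$2m$ supertile containing each type in all six orientations of the relevant parity. This is the $C_{6}$ statistical symmetry of Figure~\ref{fig:spectrepatch}, so it should hold. The next point is to confirm that the class label $k$ is fixed by the top-level orientation frame, so that the limit depends on $k$ and on nothing else. The $30^{\circ}$ rotation $R_{30}$ then maps $\LI_{1}$ to $\LI_{2}$ and sends $\phi^{(1)}_{P}$ to $\phi^{(2)}_{R_{30}P}$, mirroring Step~4 of Theorem~\ref{thm:LI}. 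Finally, equality of all patch frequencies within $\LI_{k}$ is hypothesis (H1), and the Li--Boyle argument of Section~\ref{sec:framework} gives $\rhoK^{(k)}$ independent of the representative.
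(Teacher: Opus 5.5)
Your proposal is essentially the paper's proof. The paper's proof consists only of the assertion that the four steps of Theorem~\ref{thm:LI} transfer verbatim, with \cite[Thm.~2.2]{Smith2024spectre} and Eq.~\eqref{eq:Mspec} in place of their Hat counterparts. Your version is that transfer done more carefully: you are right that the orientation-blind two-type census cannot by itself control orientation-sensitive patches, so the primitivity of the orientation-refined $\sigma^{2}$ substitution that you flag (and the paper passes over) is what Step~1 really rests on. Two minor fixes. Step~2 needs only that the boundary density of level-$2n$ supertiles tends to zero, so drop the unsupported ``fractal-free'' assertion and the rate $O\bigl(R^{2}(4+\sqrt{15})^{-n}\bigr)$ it was meant to justify. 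The $\sigma^{2}$ eigenvalue ratio is $(4-\sqrt{15})^{4}$, not $(4-\sqrt{15})^{2}$.
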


\subsection{Local recoverability within each class}
\label{sec:spec-rec}

The recoverability analysis of \S\ref{sec:hat-rec} transfers in both
of its parts. First, the reduction of
Proposition~\ref{prop:reduction}: composition of any Spectre tiling
into clusters exists, is unique \cite[Thm.~2.2]{Smith2024spectre},
and is local with some radius $R_{0}$; iterating the
chirality-preserving step $\sigma^{2}$ with linear inflation
$\lambda = 4+\sqrt{15}$, the uncertain region measured in level-$n$
units converges to the fixed point
\begin{equation}
R^{*}_{\mathrm{Spec}} \;=\; \frac{R_{0}}{\lambda - 1}
\;=\; \frac{R_{0}}{3 + \sqrt{15}} \;\approx\; 0.146\, R_{0} .
\label{eq:RstarSpec}
\end{equation}
Second, and more importantly, the nonsingular recoverability theorem:
the Spectre hull arises from a $4{:}2$ Euclidean cut-and-project
scheme with regular windows of Rauzy-fractal type \cite{BGMS2025},
so it too is almost automorphic, carries a torus parametrization,
and the proof of Theorem~\ref{thm:recover} applies verbatim within
each LI class. Hence \emph{nonsingular} Spectre tilings of a fixed
class are recoverable from the complement of any bounded region,
unconditionally; the singular-pair question
(Lemma~\ref{lem:defect}) remains open for the Spectre exactly as for
the Hat, with the same purely geometric reformulation --- the
Spectre also tiles without matching rules. An erasure of radius $r$
closes after $\lceil \log_{4+\sqrt{15}} (r/a) \rceil$ steps of
$\sigma^{2}$-deflation:
\begin{equation}
r_{n} \;=\; a\, (4+\sqrt{15})^{n} .
\label{eq:rspec}
\end{equation}
Recoverability holds tiling-by-tiling \emph{within} each LI class,
and across classes it is automatic: tilings from different classes
never agree outside a bounded region, by the frequency argument of
Lemma~\ref{lem:separation} applied to the class frequencies
\eqref{eq:fspec}. What becomes of the two classes at the level of
the \emph{code} depends on the gauge group, to which we now turn.

\subsection{Gauging: the Spectre sectors merge under rotations}
\label{sec:spec-qubit}

For the Hat, the two classes are exchanged by reflections, which lie
outside the gauge group $\SEtwo$; that is why the chirality bit of
\S\ref{sec:hat-code} exists. For the Spectre, the exchanging isometry
is the rotation $R_{30}$ by $30^{\circ}$ \cite{BGMS2025} --- a
\emph{proper} isometry. The consequence is immediate but decisive.

\begin{theorem}[Merger under $\SEtwo$]
\label{thm:merger}
Let $T \in \LI_{1}$. Then $R_{30} T \in \LI_{2}$ lies in the
$\SEtwo$-orbit of $T$. Hence the orbit superposition
$|\Psi_{[T]}\rangle = \int_{\SEtwo} dg\, |gT\rangle$ runs over
tilings of both classes, the putative sector spaces built from
$\LI_{1}$ and from $\LI_{2}$ coincide, and no superselected class
label survives $\SEtwo$-gauging.
\end{theorem}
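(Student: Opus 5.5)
The argument is essentially group-theoretic, so I would organize it around three short steps: orbit membership, invariance of the orbit state, and the identification of the sector spaces.

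\emph{Orbit membership.} The rotation $R_{30}$ about the origin is a proper isometry, so $R_{30}\in\SEtwo$. By definition the $\SEtwo$-orbit of $T$ is $\{gT : g\in\SEtwo\}$, and it therefore contains $R_{30}T$. That $R_{30}T$ lies in $\LI_{2}$ is the input from \cite{BGMS2025} recalled in \S\ref{sec:spec-li}. I would check this directly on frequencies: $R_{30}$ maps every even orientation to an odd one and vice versa, so the majority frequency $(5+\sqrt{15})/10$ of Eq.~\eqref{eq:fspec} moves from the even to the odd orientations. This rules out $R_{30}T\in\LI_{1}$. Since the tilings of the Spectre family fall into the two classes $\LI_{1}$ and $\LI_{2}$, it places $R_{30}T$ in $\LI_{2}$.

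\emph{Invariance of the orbit state.} I would use left-invariance of Haar measure on $\SEtwo$, which is unimodular as a semidirect product of $\RR^{2}$ with the compact group $SO(2)$. The substitution $g\mapsto gR_{30}$ gives
\[
|\Psi_{[R_{30}T]}\rangle=\int_{\SEtwo}dg\,|gR_{30}T\rangle=\int_{\SEtwo}dg'\,|g'T\rangle=|\Psi_{[T]}\rangle .
\]
Here the change of variables uses right-invariance, which holds because the group is unimodular. Consequently the integrand $|gT\rangle$ ranges over tilings of both classes: for $g\in\SEtwo$, the tiling $gT$ is in $\LI_{1}$ exactly when the rotational part of $g$ is an even multiple of $30^{\circ}$ modulo $60^{\circ}$. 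This parity claim follows from the $C_{6}$ statistical symmetry of each class together with the $R_{30}$ exchange.

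\emph{Coincidence of sectors.} Define $\mathcal{C}^{(k)}=\overline{\operatorname{span}}\{|\Psi_{[T]}\rangle : T\in\LI_{k}\}$. The map $T\mapsto R_{30}T$ is a bijection $\LI_{1}\to\LI_{2}$, whose inverse is $R_{-30}$ by the same argument. By the invariance step it sends each generator of $\mathcal{C}^{(1)}$ to an identical generator of $\mathcal{C}^{(2)}$. Hence $\mathcal{C}^{(1)}=\mathcal{C}^{(2)}$, and there is no direct-sum decomposition analogous to Eq.~\eqref{eq:hatsum}.

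To conclude that no class label survives, I would argue by contradiction. Suppose there were an operator or label distinguishing sector $1$ from sector $2$ on codewords. It would have to take different values on $|\Psi_{[T]}\rangle$ and $|\Psi_{[R_{30}T]}\rangle$, which are the same vector. In particular the Hat-style counter, now counting odd Spectres, is not $\SEtwo$-invariant, in contrast to $N^{\mathrm{refl}}_{K}$.

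The main obstacle is the last, semantic claim. To make ``no superselected label'' precise, I would formalize a superselected label as a $\SEtwo$-invariant function on orbits that is constant on each class. I would then show that any such function must agree on $\LI_{1}$ and $\LI_{2}$, because the two classes meet in a single orbit. The orbit-membership and Haar-measure steps are routine.
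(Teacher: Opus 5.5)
Your proposal is correct and is essentially the paper's argument: $R_{30}\in\SEtwo$ and $\LI_{2}=R_{30}\,\LI_{1}$, so every $\SEtwo$-orbit meeting $\LI_{1}$ also meets $\LI_{2}$, and the two generating sets $\{|\Psi_{[T]}\rangle\}$ are identical. Your Haar-measure step (right-invariance via unimodularity) only makes explicit that $|\Psi_{[T]}\rangle$ depends on the orbit alone, which the paper builds into its notation; the parity characterization of when $gT\in\LI_{1}$ is not needed (and for rotation angles that are not multiples of $30^{\circ}$, $gT$ lies in neither fixed-frame class).
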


\begin{proof}
$R_{30} \in \SEtwo$ and $\LI_{2} = R_{30}\, \LI_{1}$
\cite{BGMS2025}, so $[T]_{\SEtwo} = [R_{30}T]_{\SEtwo}$ meets both
classes, and the generating sets of the two putative sectors are
identical.
\end{proof}

The root cause is frame dependence. The Spectre label distinguishes
the two rotational subclasses --- even versus odd multiples of
$30^{\circ}$ --- and orientation parity is not an intrinsic property
of a tile: it refers to an external reference direction, which
rotations redefine. Tile chirality, by contrast, is frame-free. A
class label survives gauging by $G$ exactly when it is $G$-invariant;
chirality is $\SEtwo$-invariant, orientation parity is not.

The Spectre's two-sector structure is therefore relative to a smaller
gauge group. Let $G_{6} = \RR^{2} \rtimes C_{6}$ be the group
generated by all translations and the rotations by multiples of
$60^{\circ}$. Then $G_{6}$ preserves each LI class: each class has
$C_{6}$ statistical symmetry, and orientation parity is invariant
under $60^{\circ}$ rotations.

\begin{theorem}[Spectre sectors under the hexagonal subgroup]
\label{thm:spec-sector}
Define, for nonsingular $T \in \LI_{k}$,
$|\Psi^{(k)}_{[T]}\rangle = \int_{G_{6}} dg\, |gT\rangle$, let
$\CSpec^{(k)}$ be the closed span, and set
$\CSpec = \CSpec^{(1)} \oplus \CSpec^{(2)}$. Then:
(i) each sector corrects erasure of any bounded region
$K \subset \RR^{2}$ (over all of $\LI_{k}$ if the Spectre analogue
of Lemma~\ref{lem:defect} holds);
(ii) tilings from different classes never agree outside a bounded
set, so cross-sector matrix elements of any $O_{K}$ vanish; and
(iii) the orientation-parity counter
\[
N^{\mathrm{even}}_{K} \;=\;
\sum_{\text{Spectres } t \subset K}
\mathbf{1}\bigl[ t \text{ is in an even orientation} \bigr] ,
\]
well defined because parity is $G_{6}$-invariant, separates the
sectors: by Eq.~\eqref{eq:fspec},
\begin{equation}
\Delta \;=\;
\bigl\langle N^{\mathrm{even}}_{K} \bigr\rangle_{(1)} -
\bigl\langle N^{\mathrm{even}}_{K} \bigr\rangle_{(2)}
\;=\; |K|_{\mathrm{tiles}}\, \frac{\sqrt{15}}{5} .
\label{eq:Delta}
\end{equation}
Relative to $G_{6}$, the Spectre code is thus a hybrid memory of
exactly the type of Theorem~\ref{thm:hatcode}.
\end{theorem}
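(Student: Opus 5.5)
The plan is to mirror the proof sketch of Theorem~\ref{thm:hatcode}, swapping the Hat inputs for their Spectre counterparts and the gauge group $\SEtwo$ for $G_{6}$. The first thing to settle is that the two sectors are genuinely distinct, in contrast to the situation of Theorem~\ref{thm:merger}. The group $G_{6}$ consists of translations together with rotations by multiples of $60^{\circ}$. Each such element shifts every tile's orientation by an even multiple of $30^{\circ}$, so it preserves orientation parity. Translations preserve all patch frequencies, and $60^{\circ}$ rotations preserve the parity-class frequencies by the $C_{6}$ statistical symmetry of each class \cite{BGMS2025}, so $g\,\LI_{k} = \LI_{k}$ for every $g \in G_{6}$. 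Nonsingularity is equivariant under isometries, as remarked after Theorem~\ref{thm:recover}, so the nonsingular part of each class is also $G_{6}$-invariant. It follows that every orbit superposition $|\Psi^{(k)}_{[T]}\rangle$ is supported on tilings of the single class $\LI_{k}$. Consequently $\CSpec^{(1)}$ and $\CSpec^{(2)}$ are orthogonal, and the direct sum is well defined.

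For (i), I work inside a fixed sector and verify the two Li--Boyle hypotheses. (H1) is Theorem~\ref{thm:LIspec}: patch frequencies are uniform on $\LI_{k}$, so the reduced state on $K$ is the frequency-weighted mixture over patches. One point needs a brief check. Averaging over $G_{6}$ rather than $\SEtwo$ mixes only the six rotations under which the class is statistically symmetric, so the resulting mixture still depends only on $k$. (H2) is the nonsingular recoverability of \S\ref{sec:spec-rec}, that is, Theorem~\ref{thm:recover} transported to the Spectre torus parametrization within each class. The off-diagonal argument of Section~\ref{sec:framework} then applies. If $[T_{1}] \neq [T_{2}]$ are distinct $G_{6}$-orbits, then $gT_{1}$ and $hT_{2}$ cannot agree outside $K$, because agreement would force $gT_{1} = hT_{2}$ and hence equal orbits. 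The Knill--Laflamme conditions \eqref{eq:KL} therefore hold within each sector. If the Spectre analogue of Lemma~\ref{lem:defect} is assumed, the nonsingularity restriction can be dropped.

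For (ii), I apply Lemma~\ref{lem:separation} with the Spectre class frequencies \eqref{eq:fspec} in place of the Hat frequencies. A bounded modification leaves all density limits unchanged. Tilings in $\LI_{1}$ and $\LI_{2}$ have even-orientation frequencies $(5\pm\sqrt{15})/10$ respectively, so they cannot agree outside any bounded set. Since $G_{6}$ preserves the classes, $gT_{1}$ and $hT_{2}$ lie in different classes for all $g,h \in G_{6}$, and every cross-sector matrix element $\langle \Psi^{(1)}|O_{K}|\Psi^{(2)}\rangle$ vanishes term by term.

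For (iii), $N^{\mathrm{even}}_{K}$ is $G_{6}$-invariant by the parity remark above, so its expectation on a sector is the class-$k$ density of even tiles multiplied by $|K|_{\mathrm{tiles}}$. By \eqref{eq:fspec} and the exchange of roles under $R_{30}$, this density is $(5+\sqrt{15})/10$ in $\LI_{1}$ and $(5-\sqrt{15})/10$ in $\LI_{2}$. The difference is $\sqrt{15}/5$ per tile, which gives \eqref{eq:Delta}.

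I expect the main obstacle to be the first step, namely justifying that $60^{\circ}$ rotations map each LI class onto itself rather than merely preserving the parity totals. My approach would be to invoke the $C_{6}$ statistical symmetry of \cite{BGMS2025} at the level of all patch frequencies. Concretely, $R_{60}T$ has the same patch frequencies as $T$, so it lies in the same LI class by the definition of that class.
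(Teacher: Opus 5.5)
Your proposal is correct and follows essentially the paper's own argument, which also transfers the proof of Theorem~\ref{thm:hatcode}: (H1) comes from Theorem~\ref{thm:LIspec}, (H2) from the nonsingular recoverability of \S\ref{sec:spec-rec}, class separation from the frequency argument of Lemma~\ref{lem:separation} with the frequencies \eqref{eq:fspec}, and $\Delta$ from those same frequencies. The step you flag as the main obstacle --- that $G_{6}$ maps each $\LI_{k}$ onto itself because the $C_{6}$ statistical symmetry holds for all patch frequencies, not just the parity totals --- is exactly the justification the paper gives immediately before stating the theorem.
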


The proof is identical in structure to Theorem~\ref{thm:hatcode}:
(H1) per class is Theorem~\ref{thm:LIspec}, (H2) per class is
\S\ref{sec:spec-rec}, class separation is the frequency argument of
Lemma~\ref{lem:separation} applied to the frequencies
\eqref{eq:fspec}, and the diagonal constants differ by
Eq.~\eqref{eq:Delta}.

To summarize the gauge dependence across the family: under $G_{6}$,
both monotiles carry a superselected bit; under $\SEtwo$, the Hat's
survives and the Spectre's is gauged away; under the full $\Etwo$,
both disappear (reflections exchange the Hat classes, and map Spectre
tilings to tilings by the mirrored Spectre, outside the family
altogether). Penrose, Ammann--Beenker and Fibonacci carry no bit
under any of these groups, having a single class each. Granting that
parity is not a gauge symmetry of nature, $\SEtwo$ is the physically
distinguished choice, and the Hat is the unique member of the family
whose long-range order stores a frame-independent classical bit.
Table~\ref{tab:gauge} summarizes the pattern.

\begin{table}[t]
\centering
\begin{tabular}{@{}lcc@{}}
\toprule
gauge group $G$ & Hat & Spectre \\
\midrule
$\RR^{2} \rtimes C_{6}$ & 2 sectors (bit) & 2 sectors (bit) \\
$\SEtwo$                 & \textbf{2 sectors (bit)} & 1 sector \\
$\Etwo$                  & 1 sector & 1 sector \\
\bottomrule
\end{tabular}
\caption{Sector structure as a function of the gauged isometries.
Enlarging $G$ can only merge sectors, and a merger occurs exactly
when $G$ acquires the class-swapping isometry: the proper rotation
$R_{30}$ for the Spectre (between the first and second rows), the
improper reflections for the Hat (between the second and third). The
physically natural row is $\SEtwo$, where the Hat keeps its bit and
the Spectre does not.}
\label{tab:gauge}
\end{table}

\subsection{Code parameters and comparison}
\label{sec:spec-params}

The parameters of the Spectre code: two translational LI classes
exchanged by a proper rotation, hence sectors only relative to
$G_{6}$; statistical rotational symmetry $C_{6}$ per class with no
reflection component; minority-orientation fraction
$(5-\sqrt{15})/10$ per class, a rotational subclass rather than a
set of reflected tiles; correctable radius
$r_{n} = a (4+\sqrt{15})^{n}$ per $\sigma^{2}$ step.
Table~\ref{tab:compare} collects the comparison across the Li--Boyle
family.

\begin{table*}[t]
\centering
\renewcommand{\arraystretch}{1.20}
\setlength{\tabcolsep}{6pt}
\small
\begin{tabular}{@{}lccccc@{}}
\toprule
                        & Penrose      & AB              & Fibonacci   & Hat                & Spectre \\
\midrule
Prototiles              & 2            & 2               & 2           & 1 (w/ refl.)       & 1 (chiral) \\
Linear inflation        & $\gphi$    & $1+\sqrt{2}$    & $\gphi$   & $\gphi^{2}$      & $4+\sqrt{15}$ ($\sigma^{2}$) \\
Rot.\ symmetry (per class) & $C_{5}$   & $C_{8}$         & $C_{1}$     & $C_{6}$            & $C_{6}$ \\
LI classes (transl.)    & 1            & 1               & 1           & \textbf{2}        & \textbf{2} \\
Class-swapping isometry & ---          & ---             & ---         & reflection         & $30^{\circ}$ rotation \\
Swap proper?            & ---          & ---             & ---         & no                 & yes \\
Sectors under $\SEtwo$ & 1            & 1               & 1           & \textbf{2}        & 1 \\
Superselected bit ($\SEtwo$) & no     & no              & no          & \textbf{yes}      & no \\
Minority fraction (per class) & ---   & ---             & ---         & $(3-\sqrt{5})/6$   & $(5-\sqrt{15})/10$ \\
Corr.\ radius $r_n$    & $\gphi^{n}$& $(1+\sqrt{2})^{n}$& $\gphi^{n}$& $\gphi^{2n}$   & $(4+\sqrt{15})^{n}$ \\
\bottomrule
\end{tabular}
\caption{The Li--Boyle code family. Both monotiles have two
translational LI classes; a class label survives $\SEtwo$-gauging
exactly when the class-swapping isometry is improper, which singles
out the Hat. The Spectre's sectors exist relative to the hexagonal
subgroup $G_{6}$ (Theorem~\ref{thm:spec-sector}). The Spectre is
combinatorially equivalent to a Hat--Turtle tiling on the
kisrhombille substrate \cite[Thm.~3.1]{Smith2024spectre}, on which
the code can be realized.}
\label{tab:compare}
\end{table*}

The last rows of Table~\ref{tab:compare} are the point of
Sections~\ref{sec:hat} and \ref{sec:spectre}: the monotiles are the
first members of the family with more than one LI class, and among
them only the Hat's class label is exchanged by an improper isometry
--- which is why, once all proper isometries are gauged, the Hat
alone retains a superselected bit.

\section{Code dimension and open problems}
\label{sec:discussion}

\paragraph{What does the code dimension count?}
A natural finite-size question is how the dimension of a
toroidal-approximant code space grows with system size $L$. We
caution against a tempting but unfounded shortcut: the substitution
gives $N_{\mathrm{tiles}} \sim \lambda_{+}^{\,n}$ with
$L \sim \lambda_{\mathrm{lin}}^{\,n}$, whence
$N_{\mathrm{tiles}} \sim L^{2}$ --- but this is an area identity that
counts tiles, not code sectors, logical states, or inequivalent
approximants, and no theorem converts one into the other. A more
promising mechanism is topological: the Hat hull has
$\check{H}^{1}(\OHat, \ZZ) = \ZZ^{4}$ \cite{BGS2025} and the Spectre
hull $\check{H}^{1}(\Omega, \CC) = \CC^{4}$ \cite{BGMS2025}, and in
periodic topological codes the ground-space degeneracy is governed by
exactly such first-cohomology data (for the toric code,
$H^{1}(T^{2}) = \ZZ^{2}$ yields four ground states). Whether the four
generators of $\check{H}^{1}$ manifest as logical operators, flux
sectors, or finite-size sector labels of the monotile codes is, to
our knowledge, unexplored; we leave the finite-size dimension as an
open problem rather than assert a scaling law.

\paragraph{Open problems.}
(i) Settle the singular-pair question (Lemma~\ref{lem:defect}) for
the Hat and the Spectre: does either monotile admit two distinct
tilings agreeing outside a bounded region, as Fibonacci does, or
none, as Penrose does? Equivalently: can any tile-union region be
retiled a second way? Our computations exclude all tile-union regions occurring
in a certified $2490$-tile sample; scaling further via
substitution-generated patches, or
finding a structural obstruction (a Hat analogue of the Ammann
lines, or an argument exploiting the border-forcing property of the
CAP inflation \cite[Fig.~9]{BGMM2025}), would settle it. The residual scopes
$R^{*} = R_{0}/\gphi$ (Hat) and $R_{0}/(3+\sqrt{15})$ (Spectre) from
Proposition~\ref{prop:reduction} bound where a counterexample can
hide, with $R_{0}$ the locality radius of the composition case
analysis of \cite{Smith2024hat, Smith2024spectre}.
(ii) Relate the rank-4 first cohomology to the code structure, as
sketched above.
(iii) Construct explicit finite toroidal approximants for the Hat and
Spectre codes and determine their sector counts, connectivity under
local moves, and the fate of the Hat chirality bit at finite
size.
(iv) Physical realizations: Hat tilings live on the periodic
kisrhombille substrate, and the Spectre is combinatorially equivalent
to a Hat--Turtle tiling on the same substrate
\cite[Thm.~3.1]{Smith2024spectre}, suggesting lattice-model
implementations. For the Hat, the chirality bit of
\S\ref{sec:hat-code} would appear as a two-fold ground-state label
read by the coarse local observable of Eq.~\eqref{eq:DeltaHat}.

\section*{Acknowledgments}
J.B. appreciates fruitful discussions with Maria Vallespir-Socias,
Joana Rossell\'o, Margalida Batle, Maria del Mar Batle, and Regina
Batle. J.B. is grateful to Arnaud Ch\'eritat (Institut de
Math\'ematiques de Toulouse, Paul Sabatier University) for kindly
providing a big sample of hat monotiles. J.B. received no funding for
the present work.

\end{document}